\pgfplotsset{compat=\pgfplotsversion}
\tikzset{my loop/.style =  {to path={
			\pgfextra{}
			[looseness=12,min distance=10mm]
			\tikz@to@curve@path},font=\sffamily\small
	}}  
	\tikzstyle{state}=[circle,draw=black,inner sep=0pt,minimum size=10pt]
	\tikzstyle{class red}=[fill=red, regular polygon, regular polygon sides=4]
	\tikzstyle{class yellow}=[fill=yellow, regular polygon, regular polygon sides=3]
	\tikzstyle{class green}=[fill=green, circle]
	\newcommand{\stateVCentered}[1]{\ensuremath{\vcenter{\hbox{\hspace{0.5pt}\ensuremath{#1}}}}}
	\newcommand{\statered}{%
		\stateVCentered{%
			\begin{tikzpicture}%
			\node[state, class red, minimum size=9pt]    {};%
			\end{tikzpicture}%
		}%
	}
	\newcommand{\stateredInSubscript}{%
		\stateVCentered{%
			\begin{tikzpicture}%
			\node[state, class red, minimum size=5pt]    {};%
			\end{tikzpicture}%
		}
	}
	\newcommand{\stateyellow}{%
		\stateVCentered{%
			\begin{tikzpicture}%
			\node[state, class yellow, minimum size=9pt]    {};%
			\end{tikzpicture}%
		}%
	}
	\newcommand{\stateyellowInSubscript}{%
		\stateVCentered{%
			\begin{tikzpicture}%
			\node[state, class yellow, minimum size=5pt]    {};%
			\end{tikzpicture}%
		}%
	}
	\newcommand{\stategreen}{%
		\stateVCentered{%
			\begin{tikzpicture}%
			\node[state, class green, minimum size=9pt]    {};%
			\end{tikzpicture}%
		}%
	}  
	\newcommand{\stategreenInSubscript}{%
		\stateVCentered{%
			\begin{tikzpicture}%
			\node[state, class green, minimum size=5pt]    {};%
			\end{tikzpicture}%
		}%
	}
\newcommand{\mystackrel}[2]{%
 \mathrel{\vbox{\offinterlineskip\ialign{%
  \hfil##\hfil\cr
  $\scriptstyle#1$\cr
  \noalign{\kern.2ex}
  $#2$\cr
}}}}
\newtheorem{theorem}{Theorem}
\newtheorem{proposition}{Proposition}
\newtheorem{lemma}{Lemma}
\newdefinition{definition}{Definition}
\newdefinition{example}{Example}
\newdefinition{remark}{Remark}
\newcommand{\interleavedProd}[2]{#1 \curlywedgedownarrow #2}
\newcommand{\syncProd}[2]{#1 \otimes #2}
\newenvironment{myexample}{\begin{example}}{\hfill$\blacklozenge$\end{example}}
\newenvironment{myproof}[1][]{\begin{proof}[Proof #1]}{\end{proof}}
\renewcommand{\epsilon}{\varepsilon}
\renewcommand{\phi}{\varphi}
\DeclareMathOperator{\proj}{proj}
\newcommand{\vct}[1]{\mathbf{e_{#1}}}
\newcommand{\vctproj}[2][]{\proj_{\vct{#1}}{#2}}
\newcommand{\reals}{\mathbb{R}}
\newcommand{\naturals}{\mathbb{N}}
\newcommand{\setcond}[2]{\lbrace\, #1 \mid #2 \,\rbrace}
\newcommand{\setnocond}[1]{\{#1\}}
\newcommand{\setcardinality}[1]{|#1|}
\newcommand{\interval}[2]{[#1,#2]}
\newcommand{\sd}{\mu} % states distribution
\newcommand{\gd}{\rho} % generic distribution
\newcommand{\distclass}[2][\relord]{[#2]_{#1}}
\newcommand{\dirac}[1]{\delta_{#1}}
\newcommand{\class}[1]{[#1]}
\newcommand{\functionDot}{\,\cdot\,}
\newcommand{\Disc}[1]{\Delta(#1)}
\newcommand{\imdp}[1][I]{\mathfrak{#1}}
\newcommand{\stateSet}{S}
\newcommand{\actionSet}{\mathcal{A}}
\newcommand{\stateActionSet}[1]{\actionSet(#1)}
\newcommand{\startState}[1][s]{\bar{#1}}
\newcommand{\unfold}{\mathtt{UF}}
\newcommand{\fold}{\mathtt{F}}
\newcommand{\foldAction}{f}
\newcommand{\APSet}{\mathtt{AP}}
\newcommand{\APLabelling}{L}
\newcommand{\intTransitionProbability}{\mathit{I}}
\newcommand{\intervalSet}{\mathbb{I}}
\newcommand{\extreme}[1]{\mathtt{{Ext}}({#1})}
\newcommand{\posreals}{\reals_{\geq 0}}
\newcommand{\aut}[1][P]{\mathfrak{#1}}
\newcommand{\transitionRelation}{\mathit{T}}
\newcommand{\tr}{\mathit{tr}}
\newcommand{\source}[1]{\mathit{src}(#1)}
\newcommand{\target}[1]{\mathit{trg}(#1)}
\newcommand{\combined}{\mathrm{c}}
\newcommand{\strongTransition}[2]{{#1 \longrightarrow #2}}
\newcommand{\strongCombinedTransition}[2]{{#1 \longrightarrow_{\combined} #2}}
\newcommand{\relord}[1][\relsymbol]{\mathcal{#1}}
\newcommand{\rel}[1][\relsymbol]{\mathrel{\relord[#1]}}
\newcommand{\idrelord}[1][\idrelsymbol]{\relord[#1]}
\newcommand{\liftrelord}[1][\relord]{\mathcal{L}(#1)}
\newcommand{\liftrel}[1][\relord]{\mathrel{\liftrelord[#1]}}
\newcommand{\partitionset}[2][\relord]{#2/#1}
\newcommand{\acronym}[1]{\textsl{#1}\xspace}
\newcommand{\PA}{\acronym{PA}}
\newcommand{\PAs}{\acronym{PAs}}
\newcommand{\MDPs}{\acronym{MDPs}}
\newcommand{\MC}{\acronym{MC}}
\newcommand{\MCs}{\acronym{MCs}}			
\newcommand{\bisim}{\sim}
\newcommand{\PAbisim}{\sim^{p}_{aa}}
\newcommand{\IMDP}{\acronym{IMDP}}
\newcommand{\IMDPs}{\acronym{IMDPs}}
\newcommand{\Sch}{\Sigma}
\newcommand{\Env}{\Pi}
\newcommand{\env}{\pi}
\newcommand{\sch}{\sigma}
\newcommand{\uncertainty}[2]{\mathcal{P}^{#1,#2}}
\newcommand{\paruncertainty}[3][\relord]{\mathcal{P}^{#2,#3}_{#1}}
\newcommand{\parcombuncertainty}[2][\relord]{\mathcal{P}^{#2}_{#1}}
\newcommand{\PAparcombuncertainty}[2][\relord]{\mathtt{P}^{#2}_{#1}}
\newcommand{\un}{c}
\newcommand{\transition}[2]{#1 \longrightarrow #2}
\newcommand{\transitionAction}[3]{#1 \mystackrel{#2}{\longrightarrow} #3}
\newcommand{\eqclass}{\mathcal{C}}
\newcommand{\convexComb}[1]{\mathrm{CH}(#1)}
\newcommand{\convexhull}[1]{\convexComb{#1}}
\begin{document} 
	
\begin{frontmatter}
	\title{Compositional Reasoning for Interval Markov Decision Processes}
	\author[mpi,uds]{Vahid~Hashemi}
	\author[uds]{Holger~Hermanns}
	\author[ios]{Andrea~Turrini}
	\address[mpi]{Max Planck Institute for Informatics, Saarbr\"{u}cken, Germany}
	\address[uds]{Department of Computer Science, Saarland University, Saarbr\"{u}cken, Germany}
	\address[ios]{State Key Laboratory of Computer Science, ISCAS, Beijing, China}	

	\begin{abstract}
		Model checking probabilistic CTL properties of Markov decision processes with
		convex uncertainties has been recently investigated by Puggelli et al.
		Such model checking algorithms typically suffer from the state space
		explosion. In this paper, we address probabilistic bisimulation to reduce
		the size of such an MDP while preserving the probabilistic CTL properties it satisfies. %
		In particular, we discuss the key ingredients to build up 
		the operations of parallel composition for composing interval MDP components at run-time.
		More precisely, we investigate how the parallel composition operator for interval MDPs can be defined so as to arrive at a congruence closure.
		As a result, we show that probabilistic bisimulation for interval MDPs is congruence with respect to two facets of parallelism, namely synchronous product and interleaving.
	\end{abstract}
	\begin{keyword}
		Markov Decision Process \sep Interval MDP \sep Compositionality \sep Probabilistic Bisimulation
	\end{keyword}
	\journal{Elsevier}
\end{frontmatter}	

\section{Introduction}
\label{sec:Introduction}
Probability, nondeterminism, and uncertainty are three core aspects
of real systems. \emph{Probability} arises when a system, performing
an action, is able to reach more than one state and we can estimate
the proportion between reaching each of such states: probability can
model both specific system choices (such as flipping a coin, commonly
used in randomized distributed algorithms) and general system
properties (such as message loss probabilities when sending a message
over a wireless medium). \emph{Nondeterminism} represents behaviors
that we can not or we do not want to attach a precise (possibly
probabilistic) outcome to. This might reflect the concurrent execution
of several components at unknown (relative) speeds or behaviors we
keep undetermined for simplifying the system or allowing for different
implementations. \emph{Uncertainty} relates to the fact that not all
system parameters may be known exactly, including exact %precise
probability values.

Probabilistic automata (\PAs)~\cite{Seg95} extend classical concurrency models in a simple
yet conservative fashion. In probabilistic automata, concurrent
processes may perform probabilistic experiments inside a
transition. 
\PAs are akin to \emph{Markov decision processes} (\MDPs), their fundamental
beauty can be paired with powerful model checking techniques, as
implemented for instance in the PRISM tool~\cite{KNP11}.

In \PAs and \MDPs, probability
values need to be specified precisely. This is often an impediment to
their applicability to real systems. Instead it appears more viable to
specify ranges of probabilities, so as to reflect the uncertainty in
these values. This leads to a model where intervals of probability
values replace probabilities. This is the model studied in this
paper, we call it \emph{interval Markov decision processes}, \IMDPs.

In standard concurrency theory, \emph{bisimulation} plays a central
role as the undisputed reference for distinguishing the behaviour of
systems. Besides for distinguishing systems, bisimulation relations
conceptually allow us to reduce the size of a behaviour representation
without changing its properties (i.e., with respect to logic formulae
the representation satisfies). This is particularly useful to alleviate the state
explosion problem notoriously encountered in model checking. If the
bisimulation is a congruence with respect to a parallel composition
operator used to build up the model out of smaller ones, this can
give rise to a compositional strategy to associate a small model to a
large system without intermediate state space explosion. In several
related settings, this strategy has been proven very
effective~\cite{HK00,CHLS09}. 

Markov chains are known to be closed under interleaving parallelism (if 
considering the continuous-time setting) and under synchronous (also 
called synchronous product) parallelism (if considering the 
discrete-time setting). The more general concept of asynchronous 
parallelism with synchronisation (as in CCS or CSP) is known to require 
nondeterminism so as to arrive at closure properties (yielding \PA for 
discrete time and interactive \MC~\cite{hermanns02} for continuous time).

These observations are conceptually echoed in the setting considered in 
the present paper, albeit for very different reasons. While 
nondeterminism is a genuine asset of \IMDPs, a closure property 
can not be established for asynchronous parallelism with 
synchronisation. It has been recently investigated in~\cite{HashemiHSSTW16}
the possibility of establishing a asynchronous parallelism with 
synchronisation for \IMDP models. However, the underlying construction is problematic since it does not manage 
correctly the spurious distributions. More precisely, for a pair of \IMDP components the equality of the emerged 
sets of spurious distributions as a parallelism result should be guaranteed in order to establish the congruence result.
This fact is not treated precisely in the setting of~\cite{HashemiHSSTW16} for the defined asynchronous parallelism with 
synchronisation. In this work instead \IMDPs are shown to be closed under 
interleaving parallelism, as well as under synchronous parallelism.
This enables us to develop compositionality results with respect to 
bisimulation for these two facets of parallelism.

\paragraph*{Related work}

Compositional specification of uncertain stochastic systems has been explored in various works before. Interval \MCs~\cite{JL91,DBLP:journals/rc/KozineU02} and Abstract \PAs~\cite{DBLP:conf/vmcai/DelahayeKLLPSW11} serve as specification theories for \MCs and
\PAs featuring satisfaction relation, and various refinement relations. In
order to be closed under parallel composition, Abstract \PAs allow general
polynomial constraints on probabilities instead of interval bounds. Since
for Interval \MCs it is not possible to explicitly construct parallel
composition, the problem of whether there is a common implementation of a set
of Interval \MCs is addressed
instead~\cite{DBLP:conf/lata/DelahayeLLPW11}. To the contrary, interval
bounds on \emph{rates} of outgoing transitions work well with parallel
composition in the continuous-time setting of Abstract Interactive \MCs~\cite{DBLP:conf/formats/KatoenKN09}. 
The reason is that unlike
probabilities, rates do not need to sum up to $1$. 
Authors of~\cite{yi1994reasoning} successfully define parallel composition for
interval models by separating synchronizing transitions from the
transitions with uncertain probabilities.

\paragraph*{Organization of the paper} 
We start with necessary preliminaries in
Section~\ref{sec:preliminaries}. In Section~\ref{sec:bisimulation},
we give the definition of probabilistic bisimulation for
\IMDPs and discuss the main results of~\cite{HHK14}. 
Furthermore, we show that the probabilistic bisimulation over \IMDPs is compositional and transitive. 
Finally, in Section~\ref{sec:conclusion} we conclude the paper.

\section{Preliminaries}
\label{sec:preliminaries}

Given $n \in \naturals$, we denote by $\vec{1} \in \reals^{n}$ the unit vector and by $\vec{1}^T$ its transpose. 
In the sequel, the comparison between vectors is element-wise and all vectors are column ones unless otherwise stated. 
For a given set $P \subseteq \reals^{n}$, we denote by $\convexComb{P}$ the convex hull of $P$ and by $\extreme{P}$ the set of extreme points of $P$.
If $P$ is a polytope in $\reals^{n}$ then for each $i \in \setnocond{1, \dotsc, n}$, the projection $\vctproj[i]{P}$ of $P$ is defined as the interval $\interval{\min_{i} P}{\max_{i} P}$ where $\min_{i} P = \min\setcond{x_{i}}{(x_{1},\dotsc,x_{i},\dotsc,x_{n}) \in P}$ and $\max_{i} P = \max\setcond{x_{i}}{(x_{1},\dotsc,x_{i},\dotsc,x_{n}) \in P}$.

We denote by $\intervalSet$ is a set of closed subintervals of $[0,1]$ and, for a given $[a,b] \in \intervalSet$, we let $\inf [a,b] = a$ and $\sup [a,b] = b$.

Given a set $X$, we denote by $\idrelord_{X}$ the identity equivalence relation $\idrelord_{X} = \setcond{(x,x)}{x \in X}$.
We may drop the subscript $X$ from $\idrelord_{X}$ when the set $X$ is clear from the context.

Given two relations $\relord \subseteq X \times Y$ and $\relord[S] \subseteq U \times V$, we denote by $\relord \times \relord[S]$ the relation $\relord \times \relord[S] = \setcond{((x,u),(y,v)) \in (X \times Y) \times (U \times V)}{(x,y) \in \relord, (u,v) \in \relord[S]}$.
If $\relord[X]$ is an equivalence relation on $X$ and $\relord[Y]$ an equivalence relation on $Y$, then $\relord[X] \times \relord[Y]$ is an equivalence relation on $X \times Y$.

For a given set $X$, we denote by $\Disc{X}$ the set of discrete probability distributions over $X$ and by $\dirac{x} \in \Disc{X}$ the \emph{Dirac} distribution on $x$, that is, the distribution such that for each $y \in X$, $\dirac{x}(y) = 1$ if $y = x$, $0$ otherwise. 
Given two sets $X$ and $Y$ and two distributions $\gd_{X} \in \Disc{X}$ and $\gd_{Y} \in \Disc{Y}$, we denote by $\gd_{X} \times \gd_{Y}$ the distribution $\gd_{X} \times \gd_{Y} \in \Disc{X \times Y}$ such that for each $(x,y) \in X \times Y$, $(\gd_{X} \times \gd_{Y})(x,y) = \gd_{X}(x) \cdot \gd_{Y}(y)$.
Given a finite set of indexes $I$, a multiset of distributions $\setcond{\gd_{i} \in \Disc{X}}{i \in I}$, and a multiset of real values $\setcond{p_{i} \in \posreals}{i \in I}$, we say that $\gd$ is the convex combination of $\setcond{\gd_{i} \in \Disc{X}}{i \in I}$ according to $\setcond{p_{i} \in \posreals}{i \in I}$, denoted by $\gd = \sum_{i \in I} p_{i} \cdot \gd_{i}$, if $\sum_{i \in I} p_{i} = 1$ and for each $x \in X$, $\gd(x) = \sum_{i \in I} p_{i} \cdot \gd_{i}(x)$.
For an equivalence relation $\relord$ on $X$ and $\gd_{1}, \gd_{2} \in \Disc{X}$, we write $\gd_{1} \liftrel \gd_{2}$ if for each $\eqclass \in \partitionset{X}$, it holds that $\gd_{1}(\eqclass) = \gd_{2}(\eqclass)$. 
By abuse of notation, we extend $\liftrelord$ to distributions over $\partitionset{X}$, i.e., for $\gd_{1}, \gd_{2} \in \Disc{\partitionset{X}}$, we write $\gd_{1} \liftrel \gd_{2}$ if for each $\eqclass \in \partitionset{X}$, it holds that $\gd_{1}(\eqclass) = \gd_{2}(\eqclass)$.

\subsection{Interval Markov Decision Processes}

Let us formally define Interval Markov Decision Processes. 

\begin{definition}
\label{def:imdp} 
An \emph{Interval Markov Decision Process} (\IMDP) $\imdp$ is a tuple $\imdp = (\stateSet, \startState, \actionSet, \APSet,\APLabelling, \intTransitionProbability)$, where 
$\stateSet$ is a finite set of \emph{states}, 
$\startState \in \stateSet$ is the \emph{initial state}, 
$\actionSet$ is a finite set of \emph{actions}, $\APSet$ is a finite set of \emph{atomic propositions}, 
$\APLabelling \colon \stateSet \to 2^{\APSet}$ is a \emph{labelling function}, 
and 
$\intTransitionProbability \colon \stateSet \times \actionSet \times \stateSet \to \intervalSet$ is an \emph{interval transition probability function} such that for each $s$, there exist $a$ and $s'$ such that $\intTransitionProbability(s,a,s') \neq [0,0]$. We denote by $\class{\imdp}$, the class of all finite-state finite-transition \IMDPs.  
\end{definition} 
We denote by $\stateActionSet{s}$ the set of actions that are enabled from state $s$, i.e., $\stateActionSet{s} = \setcond{a \in \actionSet}{\exists s' \in \stateSet: \intTransitionProbability(s,a,s') \neq [0,0]}$. 
Furthermore, for each state $s$ and action $a \in \stateActionSet{s}$, we let $\transitionAction{s}{a}{\sd_{s}}$ mean that $\sd_{s} \in \Disc{\stateSet}$ is a \emph{feasible distribution}, i.e., for each state $s'$ we have $\sd_{s}(s') \in \intTransitionProbability(s,a,s')$. 
We require that the set $\uncertainty{s}{a} = \setcond{\sd_{s}}{\transitionAction{s}{a}{\sd_{s}}}$ is non-empty for each state $s$ and action $a \in \stateActionSet{s}$.

An \IMDP is initiated in some state $s_{1}$ and then moves in discrete
steps from state to state forming an infinite path $s_{1} \, s_{2} \, s_3
\dots$. One step, say from state $s_{i}$, is performed as follows. First, an
action $a \in \stateActionSet{s}$ is chosen nondeterministically by
\emph{scheduler}. Then, \emph{nature} resolves the uncertainty and chooses
nondeterministically one corresponding feasible distribution $\sd_{s_{i}}
\in \uncertainty{s_{i}}{a}$. Finally, the next state $s_{i+1}$ is chosen
randomly according to the distribution $\sd_{s_{i}}$.
For a more formal treatment of the \IMDP semantics, we refer the reader to~\cite{HashemiHSSTW16,HHK14}.

Observe that the scheduler does not choose an action but a
\emph{distribution} over actions. It is well-known~\cite{Seg95} that such
randomization brings more power in the context of bisimulations. 
Note that for nature this is not the case,
since $\uncertainty{s}{a}$ is closed under convex combinations, thus nature can choose all distributions.

\subsection{Action Agnostic Probabilistic Automata}
\label{sec:PAs}

We now introduce the action agnostic probabilistic automata we use in this paper, based on the probabilistic automata framework \cite{Seg95}, following the notation of \cite{Seg06}.
Note that the probabilistic automata we consider here correspond to the \emph{simple} probabilistic automata of \cite{Seg95}.
In practice, we consider the subclass of (simple) probabilistic automata of \cite{Seg95} having as set of actions the same singleton $\setnocond{\foldAction}$, that is, all transitions are labelled by the same external action $\foldAction$.
Since this action is unique, we just drop it from the definitions.

\begin{definition}
\label{def:pa}
	An \emph{(action agnostic) probabilistic automaton} (\PA) is a tuple $\aut = (\stateSet, \startState, \APSet, \APLabelling, \transitionRelation)$, where $\stateSet$ is a set of \emph{states}, $\startState \in \stateSet$ is the \emph{start} state, $\APSet$ is a finite set of \emph{atomic propositions}, $\APLabelling \colon \stateSet \to 2^{\APSet}$ is a \emph{labelling function}, and $\transitionRelation \subseteq \stateSet \times \Disc{\stateSet}$ is a \emph{probabilistic transition relation}.
\end{definition}
We denote by $\class{\aut}$ the class of all finite-state finite-transition probabilistic automata and we assume that each state in $\stateSet$ is reachable from $\startState$.  
We may drop \emph{action agnostic} since this is the only type of probabilistic automata we consider. 
The start state is also called the \emph{initial} state; 
we let $s$, $t$, $u$, $v$, and their variants with indices range over $\stateSet$.

We denote the generic elements of a probabilistic automaton $\aut$ by $\stateSet$, $\startState$, $\APSet$, $\APLabelling$, $\transitionRelation$, and we propagate primes and indices when necessary. 
Thus, for example, the probabilistic automaton $\aut'_{i}$ has states $\stateSet'_{i}$, start state $\startState'_{i}$, and transition relation $\transitionRelation'_{i}$. 

A transition $\tr = (s, \sd) \in \transitionRelation$, also written $\strongTransition{s}{\sd}$, is said to \emph{leave} from state $s$ and to \emph{lead} to the measure $\sd$.
We denote by $\source{\tr}$ the \emph{source} state $s$ and by $\target{\tr}$ the \emph{target} measure $\sd$, also denoted by $\sd_{\tr}$.
We also say that $s$ enables the transition $(s,\sd)$ and that $(s,\sd)$ is enabled from $s$.

\begin{figure}
	\centering
	\begin{tikzpicture}[->,>=stealth',auto, scale=0.9]
	\scriptsize
	\path[use as bounding box] (-0.25,-1.25) rectangle (4.75,1.35);
	\tikzstyle{every state}=[fill=none,draw=black,text=black,shape=circle]
	
	\node[state] (s) at (0,0) {$\vphantom{lg}\startState$};
	\node[state] (r) at (2.75,1) {$\vphantom{lg}r$};
	\node[state] (y) at (2.75,0) {$\vphantom{lg}y$};
	\node[state] (g) at (2.75,-1) {$\vphantom{lg}g$};
	\node[state,class red] (red) at (4.5,1) {};
	\node[state,class yellow] (yellow) at (4.5,0) {};
	\node[state,class green] (green) at (4.5,-1) {};
	
	\draw[name path=t1] (s.east) to node [below, very near end] {$0.3$} (r);
	\draw[name path=t2] (s.east) to node [above, very near end] {$0.1$} (y);
	\draw[name path=t3] (s.east) to node [above, very near end] {$0.6$} (g);
	
	\draw (r.west) to [bend right] node [above, very near end] {$1$} (s);
	
	\draw (r.east) to node [above, very near end] {$1$} (red);
	\draw (y.east) to node [above, very near end] {$1$} (yellow);
	\draw (g.east) to node [above, very near end] {$1$} (green);
	
	\draw (g.west) to [bend left] node [below, very near end] {$1$} (s);
	
	\path[name path=c] (s) circle (1.25);
	\draw[name intersections={of=t3 and c, by=i3},
	name intersections={of=t1 and c, by=i1},
	-,shorten >=0pt]
	(i3) to[bend right] (i1);
	\end{tikzpicture}
	\caption{An example of \PAs: the \PA $\aut[E]$}
	\label{fig:examplePA}
\end{figure}

\begin{myexample}
\label{ex:pa}
An example of \PA is the one shown in Figure~\ref{fig:examplePA}:
the set of states is $\stateSet = \setnocond{\startState, r, y, g, \statered, \stateyellow, \stategreen}$, the start state is $\startState$, the set of atomic propositions is $\APSet = \stateSet$, the labelling function $\APLabelling$ is such that for each $s \in \stateSet$, $\APLabelling(s) = s$, and the transition relation $\transitionRelation$ contains the following transitions: 
$\strongTransition{\startState}{\gd}$ with $\gd = \setnocond{(r, 0.3), (y, 0.1), (g, 0.6)}$, 
$\strongTransition{r}{\dirac{\stateredInSubscript}}$, 
$\strongTransition{y}{\dirac{\stateyellowInSubscript}}$, 
$\strongTransition{g}{\dirac{\stategreenInSubscript}}$, 
$\strongTransition{r}{\dirac{\startState}}$,
and 
$\strongTransition{g}{\dirac{\startState}}$.
\end{myexample}

\subsubsection{Synchronous Product}

The following definition of synchronous product is a variation of the definition of parallel composition provided in \cite{Seg95,Seg06}, where the synchronization occurs for each pair of enabled transitions.
This corresponds to the original definition of parallel composition for probabilistic automata having all transitions labelled by the same external action.
\begin{definition}
	\label{def:PAparallelComposition}
	Given two \PAs $\aut_{1}$ and $\aut_{2}$, the \emph{synchronous product} of $\aut_{1}$ and $\aut_{2}$, denoted by $\syncProd{\aut_{1}}{\aut_{2}}$, is the probabilistic automaton $\aut = (\stateSet, \startState, \APSet, \APLabelling, \transitionRelation)$ where 
% 	\begin{inparaitem}[$\bullet$]
% 		\item
		$\stateSet = \stateSet_{1} \times \stateSet_{2}$;
% 		\item
		$\startState = (\startState_{1}, \startState_{2})$;
% 		\item
		$\APSet = \APSet_{1} \cup \APSet_{2}$;
% 		\item
		for each $(s_{1},s_{2}) \in \stateSet$, $\APLabelling(s_{1},s_{2}) = \APLabelling_{1}(s_{1}) \cup \APLabelling_{2}(s_{2})$; 
		and
% 		\item
		$\transitionRelation = \setcond{((s_{1},s_{2}), \sd_{1} \times \sd_{2})}{\text{$(s_{1}, \sd_{1}) \in \transitionRelation_{1}$ and $(s_{2}, \sd_{2}) \in \transitionRelation_{2}$}}$.
%  	\end{inparaitem}
\end{definition}
For two \PAs $\aut_{1}$ and $\aut_{2}$ and their synchronous product $\syncProd{\aut_{1}}{\aut_{2}}$, we refer to $\aut_{1}$ and $\aut_{2}$ as the component automata and to $\syncProd{\aut_{1}}{\aut_{2}}$ as the product automaton.

\subsubsection{Probabilistic Bisimulation}
As for the definition of synchronous product, the following definition of (strong) probabilistic bisimulation is a variation of the definition provided in \cite{Seg06}, where all actions are treated as being the same external action.
We first introduce the definition of combined transition.

\begin{definition}
\label{def:PAcombinedTransition}
	Given a \PA $\aut$ and a state $s$, we say that $s$ enables a combined transition reaching the distribution $\sd$, denoted by $\strongCombinedTransition{s}{\sd}$, if there exist a finite set of indexes $I$, a multiset of transitions $\setcond{(s, \sd_{i}) \in \transitionRelation}{i \in I}$, and a multiset of real values $\setcond{p_{i} \in \posreals}{i \in I}$ such that 
% 	\begin{itemize}
% 	\item 
		$\sum_{i \in I} p_{i} = 1$ and
% 	\item
		$\sd = \sum_{i \in I} p_{i} \cdot \sd_{i}$.
% 	\end{itemize}
\end{definition}

\begin{definition}
\label{def:PAbisim}
	Given a \PA $\aut$, an equivalence relation $\relord \subseteq \stateSet \times \stateSet$ is a \emph{(strong) (action agnostic) probabilistic bisimulation} on $\aut$ if, for each $(s,t) \in \relord$, $\APLabelling(s) = \APLabelling(t)$ and for each $\strongTransition{s}{\sd_{s}}$, there exists a combined transition $\strongCombinedTransition{t}{\sd_{t}}$ such that $\sd_{s} \liftrel \sd_{t}$.
	
	Given two states $s$ and $t$, we say that $s$ and $t$ are probabilistically bisimilar, denoted by $s \PAbisim t$, if there exists a probabilistic bisimulation $\relord$ on $\aut$ such that $(s,t) \in \relord$.
	
	Given two \PAs $\aut_{1}$ and $\aut_{2}$, we say that $\aut_{1}$ and $\aut_{2}$ are probabilistically bisimilar, denoted by $\aut_{1} \PAbisim \aut_{2}$, if there exists a probabilistic bisimulation $\relord$ on the disjoint union of $\aut_{1}$ and $\aut_{2}$ such that $(\startState_{1},\startState_{2}) \in \relord$.
\end{definition}

\begin{proposition}
\label{pro:PAbisimSyncProduct}
	Given three \PAs $\aut_{1}$, $\aut_{2}$, and $\aut_{3}$, if $\aut_{1} \PAbisim \aut_{2}$, then $\syncProd{\aut_{1}}{\aut_{3}} \PAbisim \syncProd{\aut_{2}}{\aut_{3}}$.
\end{proposition}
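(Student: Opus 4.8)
\begin{myproof}[plan]
The plan is to lift a bisimulation $\relord$ witnessing $\aut_{1} \PAbisim \aut_{2}$ to the two product automata by taking its product with the identity relation on $\stateSet_{3}$, and then to verify the transfer condition by exploiting that the synchronous product of distributions distributes over convex combinations. Concretely, I would fix a probabilistic bisimulation $\relord$ on the disjoint union of $\aut_{1}$ and $\aut_{2}$ with $(\startState_{1},\startState_{2}) \in \relord$, and take as candidate the relation $\relord' = \relord \times \idrelord_{\stateSet_{3}}$ on the disjoint union of $\syncProd{\aut_{1}}{\aut_{3}}$ and $\syncProd{\aut_{2}}{\aut_{3}}$, whose state space is exactly $(\stateSet_{1} \uplus \stateSet_{2}) \times \stateSet_{3}$. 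Since $\relord$ is an equivalence relation on $\stateSet_{1} \uplus \stateSet_{2}$ and $\idrelord_{\stateSet_{3}}$ is one on $\stateSet_{3}$, the relation $\relord'$ is an equivalence relation by the product-of-equivalences fact recalled in Section~\ref{sec:preliminaries}; moreover $((\startState_{1},\startState_{3}),(\startState_{2},\startState_{3})) \in \relord'$ follows directly from $(\startState_{1},\startState_{2}) \in \relord$. It thus remains to show that $\relord'$ is a probabilistic bisimulation.

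So I would consider an arbitrary pair $((s,u),(t,u)) \in \relord'$, which forces $(s,t) \in \relord$ and equal third components because $\idrelord_{\stateSet_{3}}$ is the identity. The labelling condition is immediate: the product labelling gives $\APLabelling(s,u) = \APLabelling(s) \cup \APLabelling(u)$ and $\APLabelling(t,u) = \APLabelling(t) \cup \APLabelling(u)$, and $\APLabelling(s) = \APLabelling(t)$ since $\relord$ is a bisimulation. For the transfer condition, take a transition $\strongTransition{(s,u)}{\sd_{s} \times \sd_{u}}$ of the product, so $\strongTransition{s}{\sd_{s}}$ and $\strongTransition{u}{\sd_{u}}$. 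The bisimulation $\relord$ supplies a combined transition $\strongCombinedTransition{t}{\sd_{t}}$ with $\sd_{s} \liftrel \sd_{t}$, say $\sd_{t} = \sum_{i \in I} p_{i} \cdot \sd_{i}$ with each $\strongTransition{t}{\sd_{i}}$ and $\sum_{i \in I} p_{i} = 1$. Pairing every $\strongTransition{t}{\sd_{i}}$ with the one fixed transition $\strongTransition{u}{\sd_{u}}$ yields product transitions $\strongTransition{(t,u)}{\sd_{i} \times \sd_{u}}$, and combining them with the same weights $p_{i}$ gives the combined transition $\strongCombinedTransition{(t,u)}{\sum_{i \in I} p_{i} \cdot (\sd_{i} \times \sd_{u})}$. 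The crucial observation is that the product distributes over the convex combination, i.e.\ $\sum_{i \in I} p_{i} \cdot (\sd_{i} \times \sd_{u}) = (\sum_{i \in I} p_{i} \cdot \sd_{i}) \times \sd_{u} = \sd_{t} \times \sd_{u}$, which holds pointwise since $(\sd_{i} \times \sd_{u})(x,y) = \sd_{i}(x) \cdot \sd_{u}(y)$. Hence $(t,u)$ enables the combined transition $\strongCombinedTransition{(t,u)}{\sd_{t} \times \sd_{u}}$.

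It then remains to check $\sd_{s} \times \sd_{u} \liftrel[\relord'] \sd_{t} \times \sd_{u}$, for which I would use that every equivalence class of $\relord'$ has the shape $\eqclass \times \setnocond{v}$ with $\eqclass \in \partitionset{(\stateSet_{1} \uplus \stateSet_{2})}$ and $v \in \stateSet_{3}$, because $\idrelord_{\stateSet_{3}}$ has only singleton classes. Computing masses gives $(\sd_{s} \times \sd_{u})(\eqclass \times \setnocond{v}) = \sd_{s}(\eqclass) \cdot \sd_{u}(v)$ and likewise $(\sd_{t} \times \sd_{u})(\eqclass \times \setnocond{v}) = \sd_{t}(\eqclass) \cdot \sd_{u}(v)$; since $\sd_{s} \liftrel \sd_{t}$ yields $\sd_{s}(\eqclass) = \sd_{t}(\eqclass)$ for every class $\eqclass$, the two masses coincide, establishing the lifting. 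As the chosen pair and transition were arbitrary and $\relord'$ is symmetric (so the same argument covers the reverse direction as well as pairs lying within a single component), this shows $\relord'$ is a probabilistic bisimulation, whence $\syncProd{\aut_{1}}{\aut_{3}} \PAbisim \syncProd{\aut_{2}}{\aut_{3}}$.

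I expect the main obstacle to be precisely the distributivity step combined with the class factorisation: one must verify that a \emph{combined} transition of $\aut_{2}$ lifts to a genuine combined transition of the product, that is, that reusing the single fixed transition $\strongTransition{u}{\sd_{u}}$ across all summands while keeping the weights $p_{i}$ exactly reconstructs $\sd_{t} \times \sd_{u}$, and that the lifting is tested against the classes of the product relation $\relord'$ rather than against the classes of $\relord$ alone. Everything else is routine bookkeeping over the product construction.
\end{myproof}
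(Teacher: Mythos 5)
Your proof is correct and takes essentially the same route as the paper's: the same candidate relation $\relord' = \relord \times \idrelord_{\stateSet_{3}}$, the same decomposition of the product transition, and the same construction of the matching combined transition by pairing each summand of $\strongCombinedTransition{t}{\sd_{t}}$ with the fixed third-component transition. The only difference is that you verify the final lifting $\sd_{s} \times \sd_{u} \liftrel[\relord'] \sd_{t} \times \sd_{u}$ explicitly by factoring the classes of $\relord'$ as $\eqclass \times \setnocond{v}$, where the paper simply appeals to standard properties of lifting.
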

\begin{proof}
The proof is a minor adaptation of the corresponding proof (cf.~\cite{Seg95}) for the original definition of probabilistic bisimulation and parallel composition of \PAs.

In the following, we use the subscript ``$j,3$'' with $j \in \setnocond{1,2}$ to refer to the component of the \PA $\aut_{j,3} = \syncProd{\aut_{j}}{\aut_{3}}$.

Let $\relord$ be the probabilistic bisimulation justifying $\aut_{1} \PAbisim \aut_{2}$ and $\relord' = \relord \times \idrelord_{\stateSet_{3}}$;
we claim that $\relord'$ is a probabilistic bisimulation between $\syncProd{\aut_{1}}{\aut_{3}}$ and $\syncProd{\aut_{2}}{\aut_{3}}$.
The fact that $\relord'$ is an equivalence relation follows trivially by its definition and the fact that $\relord$ is an equivalence relation.
The fact that $((\startState_{1},\startState_{3}), (\startState_{2},\startState_{3}))$ follows immediately by the hypothesis that $(\startState_{1}, \startState_{2}) \in \relord$ and $(\startState_{3}, \startState_{3}) \in \idrelord_{\stateSet_{3}}$.

Let $((s_{1},s_{3}),(s_{2},s_{3})) \in \relord'$.
Assume, without loss of generality, that $s_{1} \in \stateSet_{1}$ and $s_{2} \in \stateSet_{2}$;
the other cases are similar.
The fact that $\APLabelling_{1,3}(s_{1},s_{3}) = \APLabelling_{2,3}(s_{2},s_{3})$ is straightforward, since by definition of synchronous product and the hypothesis that $s_{1} \rel s_{2}$, we have that $\APLabelling_{1,3}(s_{1},s_{3}) = \APLabelling_{1}(s_{1}) \cup \APLabelling_{3}(s_{3}) = \APLabelling_{2}(s_{2}) \cup \APLabelling_{3}(s_{3}) = \APLabelling_{2,3}(s_{2},s_{3})$, as required.

Consider now a transition $\strongTransition{(s_{1},s_{3})}{\sd_{1,3}}$.
By definition of synchronous product, there exist $\sd_{1}$ and $\sd_{3}$ such that $\strongTransition{s_{1}}{\sd_{1}} \in \transitionRelation_{1}$, $\strongTransition{s_{3}}{\sd_{3}} \in \transitionRelation_{3}$, and $\sd_{1,3} = \sd_{1} \times \sd_{3}$.
Since $s_{1} \rel s_{2}$, it follows that there exists a combined transition $\strongCombinedTransition{s_{2}}{\sd_{2}}$ such that $\sd_{1} \liftrel \sd_{2}$.
Let $I$ be a the finite set of indexes, $\setcond{(s_{2}, \sd_{2,i}) \in \transitionRelation_{2}}{i \in I}$ be a multiset of transitions, and $\setcond{p_{i} \in \posreals}{i \in I}$ be a multiset of real values such that $\sum_{i \in I} p_{i} = 1$ and $\sd_{2} = \sum_{i \in I} p_{i} \cdot \sd_{2,i}$.
By definition of synchronous product, it follows that for each $i \in I$, $\strongTransition{(s_{2},s_{3})}{\sd_{2,i} \times \sd_{3}} \in \transitionRelation_{2,3}$, hence we have the combined transition $\strongCombinedTransition{(s_{2},s_{3})}{\sd_{2} \times \sd_{3}}$.
By standard properties of lifting (see, e.g.,~\cite{TurriniH14}), it follows that $\sd_{1} \times \sd_{3} \liftrel[\relord'] \sd_{2} \times \sd_{3}$, as required.
\end{proof}

\subsection{\IMDPs vs. \PAs} 
\label{sec:unfoldIMDPs}
A cornerstone towards establishing compositional reasoning for \IMDPs essentially relies on \emph{transformations} from \IMDPs to \PAs and 
vice versa. To this aim, we define two mappings namely, \emph{unfolding} which unfolds a given \IMDP as a \PA and \emph{folding}
which transforms a given \PA to an \IMDP. Formally,

\begin{definition}[Unfolding mapping]\label{def:unfoldMap}	
An unfolding mapping $\unfold \colon \class{\imdp} \to \class{\aut}$ is a function that maps a given \IMDP $\imdp = (\stateSet, \startState, \actionSet, \APSet,\APLabelling, \intTransitionProbability)$ to the \PA $\aut = (\stateSet, \startState, \APSet, \APLabelling, \transitionRelation)$ where $\transitionRelation = \setcond{(s, \sd)}{s \in \stateSet, \exists a \in \stateActionSet{s}:\sd \in \extreme{\uncertainty{s}{a}}}$.
\end{definition}

\begin{figure}[!tbp]
	\centering
% 	\resizebox{\linewidth}{!}{
		\begin{tikzpicture}[->,>=stealth',auto,
			state/.style={draw,circle,minimum size=15pt,inner sep=0},
			trans/.style={font=\footnotesize},
			prob/.style={font=\scriptsize},
			junct/.style={circle,fill,minimum size=3pt,inner sep=0,outer sep=0}
		,scale=0.9]
		
			\node[state] (IMDPt) at (0,0) {$t$};
			\node[state] (IMDPu) at ($(IMDPt)+(-1,-2)$) {$u$};
			\node[state] (IMDPv) at ($(IMDPt)+(1,-2)$) {$v$};
			\node[junct] (IMDPtr) at ($3/5*(IMDPt)+1/5*(IMDPu)+1/5*(IMDPv)$) {};
			\node[anchor=center] at ($(IMDPt)+(0,-3)$) {(a)};
			
			\draw[-] (IMDPt) to node[trans,left] {$a$} (IMDPtr);
			\draw (IMDPtr) to node[prob,left] {$[0.1,0.3]$} (IMDPu);
			\draw (IMDPtr) to node[prob,right] {$[0.8,1]$} (IMDPv);

			\node[state] (PAt) at (4,0) {$t$};
			\node[state] (PAu) at ($(PAt)+(-1,-2)$) {$u$};
			\node[state] (PAv) at ($(PAt)+(1,-2)$) {$v$};
			\node[junct] (PAtr1) at ($3/5*(PAt)+2/5*(PAu)$) {};
			\node[junct] (PAtr2) at ($3/5*(PAt)+2/5*(PAv)$) {};
			\node[anchor=center] at ($(PAt)+(0,-3)$) {(b)};
			
			\draw[-] (PAt) to (PAtr1);
			\draw (PAtr1) to node[prob,left,near end] {$0.1$} (PAu);
			\draw (PAtr1) to node[prob,left,very near end] {$0.9$} (PAv);

			\draw[-] (PAt) to (PAtr2);
			\draw (PAtr2) to node[prob,right,very near end] {$0.2$} (PAu);
			\draw (PAtr2) to node[prob,right,near end] {$0.8$} (PAv);

			\draw[-latex', line width=1pt] ($1/2*(IMDPt) + 1/2*(IMDPv) + (1,0)$) to node[above] {$\unfold$} ($1/2*(PAt) + 1/2*(PAu) - (1,0)$);
		\end{tikzpicture}
% 	}
	\caption{Unfolding \IMDP $\imdp$ to \PA $\aut$}
	\label{fig:unfoldIMDP}
\end{figure}
It is worthy to note that the unfolding mapping might transform an \IMDP to a \PA with an exponentially larger size. 
This is in fact due to the exponential blow up in the number of transitions in the resultant \PA which in turn depends on the number of extreme points of the polytope constructed for each state and action in the given \IMDP. 
An example of unfolding is given in Figure~\ref{fig:unfoldIMDP}.

In order to transform a given \PA to an instance of \IMDPs, we use the folding mapping defined as follows:
\begin{definition}[Folding mapping]
\label{def:FoldMap}	
The folding mapping $\fold \colon \class{\aut} \to \class{\imdp}$ transforms a \PA $\aut = (\stateSet, \startState, \APSet, \APLabelling, \transitionRelation)$ to the \IMDP $\imdp = (\stateSet, \startState, \setnocond{\foldAction}, \APSet, \APLabelling, \intTransitionProbability)$ where, for each $s, t \in \stateSet$, $\intTransitionProbability(s, \foldAction, t) = \vctproj[t]{\convexhull{\setcond{\sd}{(s, \sd) \in \transitionRelation}}}$, where each component $\vct{uv}$ of the vector $\vct{u} \in \reals^{\setcardinality{\stateSet}}$ is defined as $\vct{uv}=\delta_{u}(v)$.
\end{definition}

\begin{figure}[!tbp]
	\centering
% 	\resizebox{\linewidth}{!}{
		\begin{tikzpicture}[->,>=stealth',auto,
			state/.style={draw,circle,minimum size=15pt,inner sep=0},
			trans/.style={font=\footnotesize},
			prob/.style={font=\scriptsize},
			junct/.style={circle,fill,minimum size=3pt,inner sep=0,outer sep=0}
		,scale=0.9]
		
			\node[state] (PAt) at (0,0) {$t$};
			\node[state] (PAy) at ($(PAt)+(0,-2)$) {$y$};
			\node[state] (PAx) at ($(PAy)-(1.25,0)$) {$x$};
			\node[state] (PAz) at ($(PAy)+(1.25,0)$) {$z$};
			\node[junct] (PAtr1) at ($3/5*(PAt)+2/5*(PAx)-(0.5,0)$) {};
			\node[junct] (PAtr2) at ($3/5*(PAt)+2/5*(PAz)+(0.5,0)$) {};
			\node[junct] (PAtr3) at ($3/5*(PAt)+2/5*(PAy)$) {};
			\node[anchor=center] at ($(PAt)+(0,-3)$) {$\aut$};
			
			\draw[-] (PAt) to (PAtr1);
			\draw (PAtr1) to node[prob,left,near end] {$\frac{7}{10}$} (PAx);
			\draw (PAtr1) to node[prob,left,very near end] {$\frac{1}{5}$} (PAy);
			\draw (PAtr1) to node[prob,below,very near end] {$\frac{1}{10}$} (PAz);

			\draw[-] (PAt) to (PAtr3);
			\draw (PAtr3) to node[prob,below,near end] {$\frac{1}{2}$} (PAx);
			\draw (PAtr3) to node[prob,left,near end] {$\frac{2}{5}$\kern-2pt} (PAy);
			\draw (PAtr3) to node[prob,above,very near end] {$\frac{1}{10}$} (PAz);

			\draw[-] (PAt) to (PAtr2);
			\draw (PAtr2) to node[prob,below,very near end] {~~~~$\frac{3}{5}$} (PAy);
			\draw (PAtr2) to node[prob,right,near end] {$\frac{2}{5}$} (PAz);

			\node[state] (IMDPt) at (4.5,0) {$t$};
			\node[state] (IMDPy) at ($(IMDPt)+(0,-2)$) {$y$};
			\node[state] (IMDPx) at ($(IMDPy)-(1.5,0)$) {$x$};
			\node[state] (IMDPz) at ($(IMDPy)+(1.5,0)$) {$z$};
			\node[junct] (IMDPtr) at ($3/5*(IMDPt)+2/5*(IMDPy)$) {};
			\node[anchor=center] at ($(IMDPt)+(0,-3)$) {$\imdp$};
			
			\draw[-] (IMDPt) to node[trans,left] {$\foldAction$} (IMDPtr);
			\draw (IMDPtr) to node[prob,left] {$[0,\frac{7}{10}]$} (IMDPx);
			\draw (IMDPtr) to node[prob,left,very near end] {$[\frac{1}{5},\frac{3}{5}]$} (IMDPy);
			\draw (IMDPtr) to node[prob,right] {$[\frac{1}{10},\frac{2}{5}]$} (IMDPz);

			\draw[-latex', line width=1pt] ($1/2*(PAt) + 1/2*(PAz) + (1,0)$) to node[above] {$\fold$} ($1/2*(IMDPt) + 1/2*(IMDPx) - (1,0)$);
		\end{tikzpicture}
% 	}
	\caption{Folding a \PA $\aut$ as an \IMDP $\imdp$}
	\label{fig:foldMap}
\end{figure}

An example of the folding mapping is shown in Figure~\ref{fig:foldMap}. 
The \PA $\aut$ has three transitions from $t$ with label $a$; 
in particular, it is worthwhile to note that for all these transitions the probability of reaching $y$ is larger than the probability of reaching $z$, so this has to happen for every combined transition leaving $t$.
According to Def.~\ref{def:FoldMap}, the folding of $\aut$ is the \IMDP $\imdp$. 
It is immediate to see that the unfolding mapping is not surjective as there may be some probabilistic transitions in the generated \IMDP specification which cannot be mapped to a probability distribution in the given \PA. 
In fact, one of such distributions is $\sd_{o}$ such that $\sd_{o}(x) = \frac{2}{5}$, $\sd_{o}(y) = \frac{1}{5}$, and $\sd_{o}(z) = \frac{2}{5}$ that clearly violates the condition $\sd_{o}(y) > \sd_{o}(z)$.
This is better recognizable by comparing the corresponding polytopes in a graphical way.
\begin{figure}[!tbp]
	\centering
\includegraphics[scale=1]{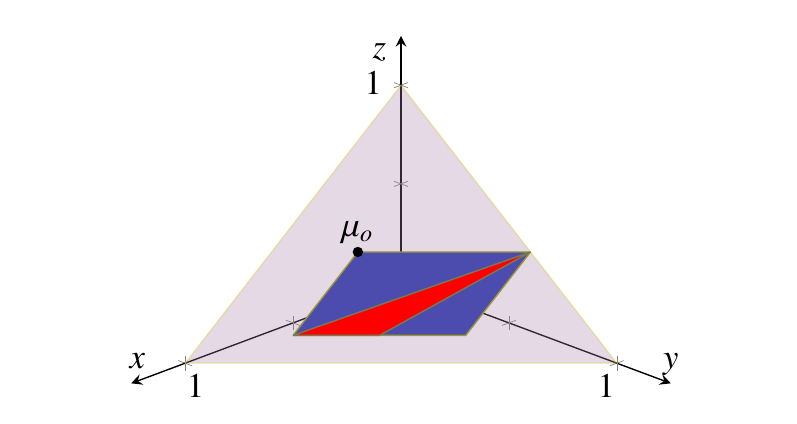}
	\caption{Comparison of polytopes resulted from folding mapping $\fold$}
	\label{fig:polytopes}	
\end{figure}
Figure~\ref{fig:polytopes} shows the three polytopes involved in $\imdp$:
the purplish large triangular polytope is the standard \emph{2-simplex} in the three dimensional space;
the reddish small triangular and the bluish parallelogram-like polytopes represent the convex hull of $\left\{\left(\frac{7}{10},\frac{1}{5},\frac{1}{10}\right), \left(\frac{1}{2},\frac{2}{5},\frac{1}{10}\right), \left(0,\frac{3}{5},\frac{2}{5}\right)\right\}$ and the polytope $\uncertainty{t}{\foldAction}$, respectively, both being a sub-polytope of the 2-simplex. 
Clearly there are points in $\uncertainty{t}{\foldAction}$ that do not belong to the reddish polytope, such as the black dot corresponding to $\sd_{o}$.
	
\begin{lemma}
	Given the folding and unfolding mappings $\fold \colon \class{\aut} \to \class{\imdp}$ and $\unfold \colon \class{\imdp} \to \class{\aut}$, in general:  
	\begin{enumerate}
		\item $\unfold({\fold(\aut)})\neq \aut$
		\item $\fold({\unfold(\imdp)})\neq \imdp$
	\end{enumerate}
\end{lemma}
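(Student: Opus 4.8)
The plan is to prove both items by \emph{counterexample}, since each only asserts failure of equality in general; conveniently the two running examples already drawn in Figures~\ref{fig:foldMap} and~\ref{fig:unfoldIMDP} supply them. The conceptual reason both round-trips break down is that each mapping discards information, but of a different kind: $\fold$ replaces a finite family of distributions by the coordinate-wise projection intervals of their convex hull, thereby forgetting any correlation between coordinates, whereas $\unfold$ replaces interval constraints by the extreme points of their intersection with the probability simplex, which need not recover the original box because some interval endpoints become infeasible once the ``sums to one'' constraint is imposed.

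For item~1, I would take the \PA $\aut$ of Figure~\ref{fig:foldMap}. Its three transitions from $t$ determine three distributions whose convex hull is the reddish triangle of Figure~\ref{fig:polytopes}, and by Definition~\ref{def:FoldMap} folding projects this hull coordinate-wise, producing exactly the \IMDP $\imdp$ shown. The feasible polytope $\uncertainty{t}{\foldAction}$ of $\imdp$ --- the intersection of the interval box with the $2$-simplex, i.e.\ the bluish region --- strictly contains the triangle (the inclusion of the triangle is immediate, as the three feasible distributions lie in the convex feasible set), and the strictness is witnessed by $\sd_{o}$. By Definition~\ref{def:unfoldMap}, $\unfold(\fold(\aut))$ has as transitions from $t$ the elements of $\extreme{\uncertainty{t}{\foldAction}}$; since a compact convex polytope is the convex hull of its extreme points, two distinct such polytopes have distinct extreme-point sets, so these vertices cannot coincide with the three original distributions of $\aut$, giving $\unfold(\fold(\aut)) \neq \aut$.

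For item~2, I would take the \IMDP $\imdp$ of Figure~\ref{fig:unfoldIMDP}, with $\intTransitionProbability(t,a,u) = [0.1,0.3]$ and $\intTransitionProbability(t,a,v) = [0.8,1]$. Its feasible polytope is the segment with extreme points $(0.1,0.9)$ and $(0.2,0.8)$, so by Definition~\ref{def:unfoldMap} the unfolded \PA has precisely these two distributions as transitions from $t$. Folding back, Definition~\ref{def:FoldMap} takes the convex hull (again the segment) and projects it coordinate-wise, yielding $\intTransitionProbability(t,\foldAction,u) = [0.1,0.2]$ and $\intTransitionProbability(t,\foldAction,v) = [0.8,0.9]$, both strictly narrower than the originals. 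Hence $\fold(\unfold(\imdp)) \neq \imdp$.

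The argument requires no general machinery, so the only delicate point is verifying the two strictness claims. For item~2 this is a one-line simplex computation: $\sd(u)+\sd(v)=1$ together with $\sd(v) \geq 0.8$ forces $\sd(u) \leq 0.2$, so the endpoint $0.3$ is unattainable and the interval genuinely shrinks. For item~1 it amounts to exhibiting a feasible distribution outside the hull of the three given ones, which $\sd_{o}$ with $\sd_{o}(x)=\tfrac{2}{5}$, $\sd_{o}(y)=\tfrac{1}{5}$, $\sd_{o}(z)=\tfrac{2}{5}$ does, since it satisfies the interval bounds yet violates $\sd_{o}(y) > \sd_{o}(z)$. Neither check presents any real obstacle.
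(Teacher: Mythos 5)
Your proof is correct and follows essentially the same route as the paper, which states the lemma without a formal proof but justifies it implicitly through exactly the two counterexamples you use: the \IMDP of Figure~\ref{fig:unfoldIMDP} (where the simplex constraint makes the endpoints $0.3$ and $1$ infeasible) and the \PA of Figure~\ref{fig:foldMap} with the spurious distribution $\sd_{o}$ violating $\sd_{o}(y) > \sd_{o}(z)$. Your write-up simply makes explicit the strictness checks and the extreme-point argument that the paper leaves to the reader.
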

As we will discuss later, the general incompleteness property of the folding mapping does not influence on the generality of our compositional reasoning for \IMDP specifications. 
We will dive into this point later in Section~\ref{Sec:compositionality}.

\section{Probabilistic Bisimulation for Interval MDPs}
\label{sec:bisimulation} 

We now recall the main results on probabilistic bisimulation
for \IMDPs, as developed in~\cite{HHK14}. In this work, we consider the notion of probabilistic bisimulation for the
cooperative resolution of nondeterminism. This semantics is very natural
in the context of verification of parallel systems with uncertain
transition probabilities in which we assume that scheduler and nature are
resolved \emph{cooperatively} in the most \emph{adversarial} way.
Moreover, resolution of a feasible probability distribution respecting the
interval constraints can be either done \emph{statically}~\cite{JL91},
i.e., at the beginning once for all, or \emph{dynamically}~\cite{SVA06,
Iyengar05}, i.e., independently for each computation step. In this paper, we focus
on dynamic approach in resolving the stochastic nondeterminism that is 
easier to work with algorithmically and can be seen as a relaxation of the static 
approach that is often intractable~\cite{DBLP:conf/tacas/BenediktLW13, 
DBLP:conf/fossacs/ChatterjeeSH08}. 

Let $\transition{s}{\sd_{s}}$ denote that a transition from $s$ to $\sd_{s}$ can be taken cooperatively, i.e., that there is a scheduler $\sch \in \Sch$ and a nature $\env \in \Env$ such that $\sd_{s} = \sum_{a \in \stateActionSet{s}} \sch(s)(a) \cdot \env(s,a)$. 
In other words, $\transition{s}{\sd_{s}}$ if $\sd_{s} \in \convexComb{\bigcup_{a \in \stateActionSet{s}} \uncertainty{s}{a}}$.
\begin{definition}[cf.~\cite{HHK14}]
\label{def:bisim-cooperative} 
	Given an \IMDP $\imdp$, let $\relord \subseteq \stateSet \times \stateSet$ be an equivalence relation. 
	We say that $\relord$ is a \emph{probabilistic bisimulation} if for each $(s,t) \in \relord$ we have that $\APLabelling(s) = \APLabelling(t)$ and for each $\transition{s}{\sd_{s}}$ there exists $\transition{t}{\sd_{t}}$ such that $\sd_{s} \liftrel \sd_{t}$.
	Furthermore, we write $s \bisim_{\un} t$ if there is a probabilistic bisimulation $\relord$ such that $(s,t) \in \relord$. 
\end{definition}
Intuitively, each (cooperative) step of scheduler and nature from
state $s$ needs to be matched by a (cooperative) step of scheduler and
nature from state $t$; symmetrically, $s$ also needs to match $t$. 
In order to support the compositional reasoning, $\bisim_{\un}$ needs to be an equivalence relation. It is not difficult to see that $\bisim_{\un}$ is reflexive and symmetric. 
What remains is to show that it is also transitive. This is indeed a property of $\bisim_{\un}$, as stated by the following proposition:
\begin{theorem}
	\label{thm:IMDPBisimTransitivity}
	Given three \IMDPs $\imdp_{1}$, $\imdp_{2}$, and $\imdp_{3}$, if $\imdp_{1} \bisim_{\un} \imdp_{2}$ and $\imdp_{2} \bisim_{\un} \imdp_{3}$, then $\imdp_{1} \bisim_{\un} \imdp_{3}$. 
\end{theorem}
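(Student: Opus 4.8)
The plan is to follow the classical route for proving transitivity of a bisimulation, adapting it to the cooperative transition relation $\transition{s}{\sd_{s}}$ and to the lifting $\liftrelord$ of Definition~\ref{def:bisim-cooperative}. Let $\relord_{12}$ be a probabilistic bisimulation on the disjoint union $\imdp_{1} \uplus \imdp_{2}$ witnessing $\imdp_{1} \bisim_{\un} \imdp_{2}$ (so $(\startState_{1},\startState_{2}) \in \relord_{12}$), and let $\relord_{23}$ be one on $\imdp_{2} \uplus \imdp_{3}$ witnessing $\imdp_{2} \bisim_{\un} \imdp_{3}$. Viewing both as relations on $\stateSet_{1} \uplus \stateSet_{2} \uplus \stateSet_{3}$, I would let $\relord$ be the transitive closure of $\relord_{12} \cup \relord_{23}$, which is already reflexive and symmetric and hence is the smallest equivalence relation containing both. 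I then claim $\relord$ is a probabilistic bisimulation on $\imdp_{1} \uplus \imdp_{2} \uplus \imdp_{3}$; since $\relord_{12},\relord_{23} \subseteq \relord$ gives $(\startState_{1},\startState_{2}),(\startState_{2},\startState_{3}) \in \relord$, transitivity yields $(\startState_{1},\startState_{3}) \in \relord$, establishing $\imdp_{1} \bisim_{\un} \imdp_{3}$.

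That $\relord$ is an equivalence is immediate from the construction. For the label condition, $\APLabelling$ is constant on every $\relord_{12}$-pair and every $\relord_{23}$-pair, so along any chain realising $(s,t) \in \relord$ the labels agree, giving $\APLabelling(s) = \APLabelling(t)$.

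The crux is the transfer condition. First I would record the key refinement property of lifting: if an equivalence relation refines $\relord$, then its induced lifting is contained in $\liftrel[\relord]$, because each $\relord$-class is a disjoint union of the finer classes and the corresponding class masses simply add up; moreover $\liftrel[\relord]$ is transitive, being defined by equality of class masses. In particular, as $\relord_{12},\relord_{23} \subseteq \relord$, both $\liftrel[\relord_{12}]$ and $\liftrel[\relord_{23}]$ are contained in $\liftrel[\relord]$. Now take $(s,t) \in \relord$ and a cooperative transition $\transition{s}{\sd_{s}}$, i.e.\ $\sd_{s} \in \convexComb{\bigcup_{a \in \stateActionSet{s}} \uncertainty{s}{a}}$. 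By construction of $\relord$ there is a finite chain $s = u_{0}, u_{1}, \dots, u_{k} = t$ with each consecutive pair in $\relord_{12}$ or in $\relord_{23}$. I would proceed by induction along this chain: setting $\nu_{0} = \sd_{s}$, given $\transition{u_{i}}{\nu_{i}}$ and $(u_{i},u_{i+1})$ in, say, $\relord_{12}$, the bisimulation $\relord_{12}$ supplies $\transition{u_{i+1}}{\nu_{i+1}}$ with $\nu_{i} \liftrel[\relord_{12}] \nu_{i+1}$, whence $\nu_{i} \liftrel[\relord] \nu_{i+1}$ by the refinement property (symmetrically for $\relord_{23}$). Transitivity of $\liftrel[\relord]$ then gives $\sd_{s} = \nu_{0} \liftrel[\relord] \nu_{k}$, so $\sd_{t} := \nu_{k}$ is the required matching transition $\transition{t}{\sd_{t}}$ with $\sd_{s} \liftrel[\relord] \sd_{t}$.

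I expect the main obstacle to lie purely in the lifting bookkeeping rather than in the cooperative transitions: one must check that a distribution $\nu_{i}$ produced at $u_{i}$ (supported in the single component containing $u_{i}$) is compared by the appropriate lifting, and that the refinement step $\liftrel[\relord_{12}] \subseteq \liftrel[\relord]$ remains valid across the three-way disjoint union — which is exactly the standard property of lifting already invoked for the synchronous-product case. The cooperative transition relation $\transition{s}{\sd_{s}}$ plays no special role beyond being the object each bisimulation matches, so the argument is insensitive to the convex-combination form of $\convexComb{\bigcup_{a \in \stateActionSet{s}} \uncertainty{s}{a}}$.
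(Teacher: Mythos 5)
Your proposal is correct and follows essentially the same route as the paper: both proofs turn the composition/union of the two witnessing bisimulations into an equivalence relation by symmetric--transitive closure, check the labelling condition along the connecting states, and discharge the transfer condition by chaining matched transitions using monotonicity of the lifting under coarsening together with transitivity of $\liftrel$. The only difference is presentational --- you work on the three-way disjoint union with arbitrary chains, whereas the paper's relation $\relord_{13}$ lives on $\stateSet_{1} \cup \stateSet_{3}$ and routes each matching step through a single intermediate state $s_{2} \in \stateSet_{2}$.
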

\begin{proof}
	Let $\relord_{12}$ and $\relord_{23}$ be the equivalence relations underlying $\imdp_{1} \bisim_{\un} \imdp_{2}$ and $\imdp_{2} \bisim_{\un} \imdp_{3}$, respectively. 
	Let $\relord_{13}$ be the symmetric and transitive closure of the set $\setcond{(s_{1},s_{3})}{\exists s_{2}. s_{1} \rel_{12} s_{2} \wedge s_{2} \rel_{23} s_{3}} \cup \setcond{(s_{1},s'_{1}) \in \stateSet_{1} \times \stateSet_{1}}{(s_{1},s'_{1}) \in \relord_{12}} \cup \setcond{(s_{3},s'_{3}) \in \stateSet_{3} \times \stateSet_{3}}{(s_{3},s'_{3}) \in \relord_{23}}$.
	We claim that $\relord_{13}$ is a probabilistic bisimulation justifying $\imdp_{1} \bisim_{\un} \imdp_{3}$.
	
	The fact that $\startState_{1} \rel_{13} \startState_{3}$ is trivial since by hypothesis we have that $\startState_{1} \rel_{12} \startState_{2}$ and $\startState_{2} \rel_{23} \startState_{3}$, so $(\startState_{1}, \startState_{3}) \in \relord_{13}$ by construction.
	
	In the following, assume that $s_{1} \in \stateSet_{1}$ and $s_{3} \in \stateSet_{3}$; 
	the other cases are similar.
	
	The labelling is respected: 
	for each $s_{1} \rel_{13} s_{3}$, we have that there exists $s_{2}$ such that $s_{1} \rel_{12} s_{2}$ and $s_{2} \rel_{23} s_{3}$; 
	this implies that $\APLabelling_{1}(s_{1}) = \APLabelling_{2}(s_{2})$ and $\APLabelling_{2}(s_{2}) = \APLabelling_{3}(s_{3})$, thus $\APLabelling_{1}(s_{1}) = \APLabelling_{3}(s_{3})$ as required.
	
	To complete the proof, consider $s_{1} \rel_{13} s_{3}$ and $\transition{s_{1}}{\sd_{1}}$.
	By hypothesis, there exists $s_{2}$ such that $s_{1} \rel_{12} s_{2}$ and $s_{2} \rel_{23} s_{3}$;
	moreover, by $\relord_{12}$ being a probabilistic bisimulation, we know that there exists $\transition{s_{2}}{\sd_{2}}$ such that $\sd_{1} \liftrel[\relord_{12}] \sd_{2}$.
	Since $\relord_{23}$ is a probabilistic bisimulation, we have that there exists $\transition{s_{3}}{\sd_{3}}$ such that $\sd_{2} \liftrel[\relord_{23}] \sd_{3}$.
	By construction of $\relord_{13}$ and the properties of lifting, it follows that $\sd_{1} \liftrel[\relord_{13}] \sd_{3}$, as required.
\end{proof}

It is shown in~\cite{HHK14} that $\bisim_\un$ is sound with respect to the PCTL properties.
Furthermore, probabilistic bisimulation for \IMDPs is computed using 
standard partition refinement approach~\cite{KS90,PTarjan87} in which the 
core part is to verify the violation of bisimulation definition that can in turn be 
done by checking the inclusion of polytopes defined as follows.   
For $s  \in \stateSet$ and an action $a \in \actionSet$, recall that
$\uncertainty{s}{a}$ denotes the polytope of feasible successor
distributions over \emph{states} with respect to taking the action $a$ in
the state $s$. %
By $\paruncertainty{s}{a}$, we denote the polytope of feasible
successor distributions over \emph{equivalence classes} of $\rel$ with
respect to taking the action $a$ in the state $s$. 
Formally, for $\sd \in \Disc{\partitionset{\stateSet}}$ we set $\sd \in \paruncertainty{s}{a}$ if, for each $\eqclass \in \partitionset{\stateSet}$, it is 
\[
	\sd(\eqclass) \in \left [ {\sum_{s' \in \eqclass}{\inf \intTransitionProbability(s, a, s')}}, {\sum_{s' \in \eqclass}{\sup \intTransitionProbability(s, a, s')}} \right ]\text{.}
\]
Furthermore, we define $\parcombuncertainty{s} = \convexComb{\bigcup_{a \in \stateActionSet{s}} \paruncertainty{s}{a}}$, the set of feasible successor
distributions over $\partitionset{\stateSet}$ with respect to taking
an \emph{arbitrary} distribution over enabled actions in state $s$. %
As specified in~\cite{HHK14}, checking violation of a given pair of states
amounts to check equality of the corresponding constructed polytopes for
the states. 

\section{Compositional Reasoning for \IMDPs}
\label{Sec:compositionality}
The compositional reasoning is a widely used technique (see, e.g.,~\cite{CGMTZ96,HK00,KKZJ07}) that permits to deal with large systems.
In particular, a large system is decomposed into multiple components running in parallel;
such components are then minimized by replacing each of them by a bisimilar but smaller one so that the overall behaviour remains unchanged.
In order to apply this technique, bisimulation has first to be extended to pairs of components and then to be shown to be transitive and preserved by the synchronous product operator.
The extension to a pair of components is trivial and commonly done (see, e.g.,~\cite{CS02,Seg95}):
\begin{definition}
\label{def:bisimPairIMDPs}
	Given two \IMDPs $\imdp_{1}$ and $\imdp_{2}$, we say that they are \emph{probabilistic bisimilar}, denoted by $\imdp_{1} \bisim_{\un} \imdp_{2}$, if there exists a probabilistic bisimulation on the disjoint union of $\imdp_{1}$ and $\imdp_{2}$ such that $\startState_{1} \bisim_{\un} \startState_{2}$.
\end{definition}

The next step is to define the synchronous product for \IMDPs:

\begin{definition}
\label{def:IMDPparallelComposition}
	Given two \IMDPs $\imdp_{1}$ and $\imdp_{2}$, we define the \emph{synchronous product} of $\imdp_{1}$ and $\imdp_{2}$ as 
	\[\syncProd{\imdp_{1}}{\imdp_{2}}:= \fold(\syncProd{\unfold(\imdp_{1})}{\unfold(\imdp_{2})})\text{.}\] 
\end{definition}
A schematic representation of constructing the synchronous product of two \IMDPs $\imdp_{1}$ and $\imdp_{2}$ is given in Figure~\ref{fig:schematic}. 
As discussed earlier, the folding mapping from \PA to \IMDP, i.e. the red arrow, is not complete and in principle, this transformation may add additional behavior to the resultant system.
For each state and action in the resultant \IMDP, these extra behaviors are essentially a set of probability distributions that do not belong to the convex hull of the enabled probability distributions for that state in the original \PA. 
At first sight, these extra behaviors generated from the folding mapping might be seen as an impediment towards showing that $\bisim_{\un}$ is a congruence for the synchronous product. 
Fortunately, as it is shown by the next theorem, these extra probability distributions are in fact \emph{spurious} and do not affect the congruence result.   
\begin{figure}[tbp]
\centering
	\begin{tikzpicture}[
			->,>=stealth',auto,
			model/.style={draw,ellipse,inner sep=2pt, minimum width=25pt}%,fill=blue!10!white
	,scale=0.8]
			
		\node[model] (imdp1) at (0,0) {$\imdp_{2}$};
		\node[model] (imdp2) at ($(imdp1) + (0,2)$) {$\imdp_{1}$};
		\node[model] (pa1) at ($(imdp1) + (2,0)$) {$\aut_{2}$};
		\node[model] (pa2) at ($(imdp2) + (2,0)$) {$\aut_{1}$};
		\node[model] (pa12) at ($0.5*(pa1) + 0.5*(pa2) + (2,0)$) {$\syncProd{\aut_{1}}{\aut_{2}}$};
		\node[model] (imdp12) at ($(pa12) + (3,0)$) {$\syncProd{\imdp_{1}}{\imdp_{2}}$};
		
		\draw (imdp1) to node[above] {$\unfold$} (pa1);
		\draw (imdp2) to node[above] {$\unfold$} (pa2);
		\draw[dashed] (pa1) to (pa12);
		\draw[dashed] (pa2) to (pa12);
		\draw[red] (pa12) to node[above] {$\fold$} (imdp12);
	\end{tikzpicture}	
	\caption{Schematic representation of the synchronous product of \IMDPs $\imdp_{1}$ and $\imdp_{2}$. 
		In this figure, we let $\aut_{1} = \unfold(\imdp_{1})$ and $\aut_{2} = \unfold(\imdp_{2})$}
	\label{fig:schematic}
\end{figure}
To this aim and in order to pave the way for establishing the congruence result, we first prove two intermediate results stating that the folding and unfolding mappings preserve bisimilarity on the corresponding codomains.
\begin{lemma}
\label{lem:unfoldBisimPreservation}
	Given two \IMDPs $\imdp_{1}$ and $\imdp_{2}$, if $\imdp_{1} \bisim_{\un} \imdp_{2}$, then $\unfold(\imdp_{1}) \PAbisim \unfold(\imdp_{2})$.
\end{lemma}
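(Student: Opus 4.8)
The plan is to show that the very relation $\relord$ witnessing $\imdp_{1} \bisim_{\un} \imdp_{2}$ on the disjoint union of the two \IMDPs is itself a probabilistic bisimulation on the disjoint union of $\unfold(\imdp_{1})$ and $\unfold(\imdp_{2})$. This is plausible because the unfolding mapping leaves the state space, the initial state, the atomic propositions, and the labelling function untouched; it only rewrites the transition structure. Consequently the facts that $\relord$ is an equivalence relation, that it relates the two initial states, and that it respects labelling all transfer verbatim from the \IMDP bisimulation to the candidate \PA bisimulation, with no further work.

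The whole argument therefore reduces to the transfer condition, and the key observation I would isolate first is a purely geometric identity relating the two notions of transition at a fixed state $s$. On the \PA side, by Definition~\ref{def:unfoldMap} the enabled transitions $\strongTransition{s}{\sd}$ are exactly the points $\sd \in \bigcup_{a \in \stateActionSet{s}} \extreme{\uncertainty{s}{a}}$, so the distributions reachable by \emph{combined} transitions $\strongCombinedTransition{s}{\cdot}$ are exactly the points of $\convexComb{\bigcup_{a \in \stateActionSet{s}} \extreme{\uncertainty{s}{a}}}$. On the \IMDP side, the cooperative transitions $\transition{s}{\sd}$ range precisely over $\convexComb{\bigcup_{a \in \stateActionSet{s}} \uncertainty{s}{a}}$. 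Since every $\uncertainty{s}{a}$ is a polytope, hence the convex hull of its finitely many extreme points, taking convex hulls commutes with the union in the sense that
\[
\convexComb{\bigcup_{a \in \stateActionSet{s}} \uncertainty{s}{a}} = \convexComb{\bigcup_{a \in \stateActionSet{s}} \extreme{\uncertainty{s}{a}}}\text{.}
\]
Thus the distributions reachable by \IMDP cooperative transitions from $s$ coincide \emph{exactly} with those reachable by \PA combined transitions from $s$ in $\unfold(\imdp)$.

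With this identity in hand the verification is short. Take any \PA transition $\strongTransition{s}{\sd_{s}}$ in $\unfold(\imdp_{1})$ from a state $s$ related to $t$ by $\relord$; being an extreme point of some $\uncertainty{s}{a}$, it lies in $\convexComb{\bigcup_{a} \uncertainty{s}{a}}$, so $\transition{s}{\sd_{s}}$ is a cooperative \IMDP transition. Because $\relord$ is an \IMDP bisimulation, there is a matching cooperative transition $\transition{t}{\sd_{t}}$ with $\sd_{s} \liftrel \sd_{t}$. By the identity above, $\sd_{t} \in \convexComb{\bigcup_{a} \extreme{\uncertainty{t}{a}}}$, i.e.\ $\sd_{t}$ is a convex combination of distributions each of which is a genuine \PA transition of $\unfold(\imdp_{2})$; since the unfolded \PA is finite-transition, this combination is finite and gives exactly a combined transition $\strongCombinedTransition{t}{\sd_{t}}$. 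As $\sd_{s} \liftrel \sd_{t}$ already holds, the \PA bisimulation condition is met; the argument is symmetric in the two states and applies uniformly whether the related states belong to the same component or to different ones.

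The only real content — and the step I expect to need the most care — is the geometric identity that the combined transitions of the unfolded \PA sweep out exactly the cooperative-transition polytope of the \IMDP; everything else is bookkeeping. This identity hinges on each $\uncertainty{s}{a}$ being a finitely generated polytope, so that its extreme points generate it and the convex hull of the union of the polytopes agrees with the convex hull of the union of their extreme-point sets.
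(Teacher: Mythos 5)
Your proposal is correct and follows essentially the same route as the paper: both use the \IMDP bisimulation $\relord$ itself as the \PA bisimulation, note that labelling and equivalence transfer verbatim, and reduce the step condition to the fact that every extreme point is a cooperative transition while every cooperative target $\sd_{t} \in \convexComb{\bigcup_{a} \uncertainty{t}{a}} = \convexComb{\bigcup_{a} \extreme{\uncertainty{t}{a}}}$ decomposes into a finite combined transition of $\unfold(\imdp_{2})$. The only difference is presentational: you package as a single geometric identity what the paper carries out as an explicit two-level convex decomposition (first over actions, then over extreme points) with the accompanying arithmetic verification.
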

\begin{myproof}
	Let $\relord$ be the probabilistic bisimulation justifying $\imdp_{1} \bisim_{\un} \imdp_{2}$; 
	we claim that $\relord$ is also a \PA probabilistic bisimulation for $\unfold(\imdp_{1})$ and $\unfold(\imdp_{2})$, that is, it justifies $\unfold(\imdp_{1}) \PAbisim \unfold(\imdp_{2})$.
	
	In the following we assume without loss of generality that $s_{1} \in \stateSet_{1}$ and $s_{2} \in \stateSet_{2}$;
	the other cases are similar.
	The fact that $\relord$ is an equivalence relation and that for each $(s_{1},s_{2}) \in \relord$, $\APLabelling_{1}(s_{1}) = \APLabelling_{2}(s_{2})$ follow directly by definition of $\bisim_{\un}$.
	Let $(s_{1}, \sd_{1}) \in \transitionRelation_{1}$:
	by definition of $\unfold$, it follows that $\sd_{1} \in \extreme{\uncertainty{s_{1}}{a_{1}}}$ for some $a_{1} \in \stateActionSet{s_{1}}$, thus in particular $\sd_{1} \in \uncertainty{s_{1}}{a_{1}}$, hence $\sd_{1} \in \convexhull{\bigcup_{a \in \stateActionSet{s_{1}}} \uncertainty{s_{1}}{a}}$.
	By hypothesis, we have that there exists $\sd_{2} \in \convexhull{\bigcup_{a_{2} \in \stateActionSet{s_{2}}} \uncertainty{s_{2}}{a_{2}}}$ such that $\sd_{1} \liftrel \sd_{2}$.
	Since $\sd_{2} \in \convexhull{\cup_{a_{2} \in \stateActionSet{s_{2}}} \uncertainty{s_{2}}{a_{2}}}$, it follows that there exist a multiset of real values $\setcond{p_{a_{2}} \in \posreals}{a_{2} \in \stateActionSet{s_{2}}}$ and a multiset of distributions $\setcond{\sd_{a_{2}} \in \uncertainty{s_{2}}{a_{2}}}{a_{2} \in \stateActionSet{s_{2}}}$ such that $\sum_{a_{2} \in \stateActionSet{s_{2}}} p_{a_{2}} = 1$ and $\sd_{2} = \sum_{a_{2} \in \stateActionSet{s_{2}}} p_{a_{2}} \cdot \sd_{a_{2}}$.
	For each $a_{2} \in \stateActionSet{s_{2}}$, since $\sd_{a_{2}} \in \uncertainty{s_{2}}{a_{2}}$, it follows that there exist a finite set of indexes $I_{a_{2}}$, a multiset of real values $\setcond{p_{a_{2},i} \in \posreals}{i \in I_{a_{2}}}$ and a multiset of distributions $\setcond{\sd_{a_{2},i} \in \extreme{\uncertainty{s_{2}}{a_{2}}}}{i \in I_{a_{2}}}$ such that $\sum_{i \in I_{a_{2}}} p_{a_{2},i} = 1$ and $\sd_{a_{2}} = \sum_{i \in I_{a_{2}}} p_{a_{2},i} \cdot \sd_{a_{2},i}$.
	This means that $\sd_{2} = \sum_{a_{2} \in \stateActionSet{s_{2}}} p_{a_{2}} \cdot \sum_{i \in I_{a_{2}}} p_{a_{2},i} \cdot \sd_{a_{2},i} = \sum_{a_{2} \in \stateActionSet{s_{2}}} \sum_{i \in I_{a_{2}}} p_{a_{2}} \cdot p_{a,i} \cdot \sd_{a_{2},i}$.
	Since for each $a_{2} \in \stateActionSet{s_{2}}$ and $i \in I_{a_{2}}$ we have that $\sd_{a_{2},i} \in \extreme{\uncertainty{s_{2}}{a_{2}}}$, it follows that $(s_{2}, \sd_{a_{2},i}) \in \transitionRelation_{2}$, thus we have the combined transition $\strongCombinedTransition{s_{2}}{\sd_{2}}$ obtained by taking as set of indexes $I = \setcond{(a_{2},i)}{a_{2} \in \stateActionSet{s_{2}}, i \in I_{a_{2}}}$, as multiset of real values $\setcond{q_{a_{2},i} \in \posreals}{(a_{2},i) \in I, q_{a_{2},i} = p_{a_{2}} \cdot p_{a_{2},i}}$, and as multiset of transitions $\setcond{(s_{2}, \sd_{a_{2},i}) \in \transitionRelation_{2}}{(a_{2},i) \in I}$:
	in fact, it is immediate to see that 
	\begin{align*}
	\sum_{(a_{2},i) \in I} q_{a_{2},i} 
		& = \sum_{(a_{2},i) \in I} p_{a_{2}} \cdot p_{a_{2},i} 
		  = \sum_{a_{2} \in \stateActionSet{s_{2}}} \sum_{i \in I_{a_{2}}} p_{a_{2}} \cdot p_{a_{2},i} \\
		& = \sum_{a_{2} \in \stateActionSet{s_{2}}} p_{a_{2}} \cdot \sum_{i \in I_{a_{2}}} p_{a_{2},i} 
		  = \sum_{a_{2} \in \stateActionSet{s_{2}}} p_{a_{2}} \cdot 1 
		  = 1
	\end{align*}
	and that 
	\begin{align*}
	& \hphantom{{}={}}\sum_{(a_{2},i) \in I} q_{a_{2},i} \cdot \sd_{a_{2},i} 
		  = \sum_{(a_{2},i) \in I} p_{a_{2}} \cdot p_{a_{2},i} \cdot \sd_{a_{2},i} \\
		& = \sum_{a_{2} \in \stateActionSet{s_{2}}} \sum_{i \in I_{a_{2}}} p_{a_{2}} \cdot p_{a_{2},i} \cdot \sd_{a_{2},i} 
		  = \sum_{a_{2} \in \stateActionSet{s_{2}}} p_{a_{2}} \cdot \sum_{i \in I_{a_{2}}} p_{a_{2},i} \cdot \sd_{a_{2},i} \\
		& = \sum_{a_{2} \in \stateActionSet{s_{2}}} p_{a_{2}} \cdot \sd_{a_{2}} 
		  = \sd_{2}\text{.}
	\end{align*}
	Moreover, by hypothesis, we have $\sd_{1} \liftrel \sd_{2}$, as required.
\end{myproof}
Likewise computation of probabilistic bisimulation for \IMDPs, we use the standard \emph{partition refinement} approach as a ground procedure to compute $\PAbisim$ for \PAs. 
Still the core part of the approach is to decide bisimilarity of a pair of states. 
For each state in the given \PA, we construct a convex hull polytope which encodes all possible behaviors that can be taken by a scheduler. 
Hence, for a given pair of states, we show that verifying if two states are bisimilar can be reduced to comparison of their corresponding convex polytopes with respect to set inclusion. 
Strictly speaking, for an equivalence relation $\relord$ on $\stateSet$ and $s \in \stateSet$, we denote by $\PAparcombuncertainty{s}$ the polytope of feasible successor distributions over \emph{equivalence classes} of $\relord$ with respect to taking a transition in the state $s$. 
Formally, 
\[
	\PAparcombuncertainty{s} = \convexhull{\setcond{\distclass{\sd}}{(s, \sd) \in \transitionRelation}}\text{,}
\]
where, for a given $\sd \in \Disc{\stateSet}$, $\distclass{\sd} \in \Disc{\partitionset{\stateSet}}$ is the probability distribution such that for each $\eqclass \in \partitionset{\stateSet}$, it is $\distclass{\sd}(\eqclass) = \sum_{s' \in \eqclass} \sd(s')$.
\begin{lemma}[cf.~{\cite[Thm.~1]{CS02}}]
\label{lem:PAbisimilarStatesSamePolytopes}
	Given a \PA $\aut$, there exists an equivalence relation $\relord$ on $\stateSet$ such that for each pair states $s, t \in \stateSet$, it holds that $s \PAbisim t$ if and only if $s \rel t$, $\APLabelling(s) = \APLabelling(t)$, and $\PAparcombuncertainty{s} = \PAparcombuncertainty{t}$.
\end{lemma}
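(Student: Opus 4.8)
The plan is to take $\relord = \PAbisimord$ as the witnessing equivalence relation. First I would recall the standard fact that $\PAbisimord$ is an equivalence relation and, being the union of all probabilistic bisimulations on $\aut$, is itself a probabilistic bisimulation (the largest one); this can be argued exactly as in the proof of Theorem~\ref{thm:IMDPBisimTransitivity}, together with the observation that the union of probabilistic bisimulations is again a probabilistic bisimulation. With this choice the condition ``$s \rel t$'' on the right-hand side is literally ``$s \PAbisim t$'', so the claimed biconditional collapses to a single obligation: whenever $s \PAbisim t$, it also holds that $\APLabelling(s) = \APLabelling(t)$ and $\PAparcombuncertainty{s} = \PAparcombuncertainty{t}$. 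Indeed, if $s \PAbisim t$ fails then $s \rel t$ fails and the right-hand conjunction is false, so the equivalence holds trivially in that case.

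The key technical step I would isolate is that the map $\sd \mapsto \distclass{\sd}$ sending a distribution over states to its induced distribution over $\partitionset{\stateSet}$ is affine, hence commutes with convex hulls. Concretely, for a combined transition $\strongCombinedTransition{s}{\sd}$ with $\sd = \sum_{i \in I} p_{i} \cdot \sd_{i}$ and $(s,\sd_{i}) \in \transitionRelation$, one computes for each class $\eqclass \in \partitionset{\stateSet}$ that $\distclass{\sd}(\eqclass) = \sum_{s' \in \eqclass} \sd(s') = \sum_{i \in I} p_{i} \sum_{s' \in \eqclass} \sd_{i}(s') = \sum_{i \in I} p_{i} \cdot \distclass{\sd_{i}}(\eqclass)$, so $\distclass{\sd} = \sum_{i \in I} p_{i} \cdot \distclass{\sd_{i}}$. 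Conversely, any convex combination of the generators arises this way. This yields the identity $\PAparcombuncertainty{s} = \setcond{\distclass{\sd}}{\strongCombinedTransition{s}{\sd}}$: the class-projections reachable by combined transitions from $s$ are exactly the convex hull of the class-projections of the ordinary transitions from $s$.

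With this identity in hand the polytope equality follows from the matching condition of bisimilarity, using $\PAbisimord$ itself as the justifying relation so that $\sd_{s} \liftrel \sd_{t}$ means precisely $\distclass{\sd_{s}} = \distclass{\sd_{t}}$. For the inclusion $\PAparcombuncertainty{s} \subseteq \PAparcombuncertainty{t}$ it suffices, by convexity of $\PAparcombuncertainty{t}$, to place each generator $\distclass{\sd_{s}}$ coming from an ordinary transition $(s,\sd_{s}) \in \transitionRelation$ inside $\PAparcombuncertainty{t}$: bisimilarity provides a combined transition $\strongCombinedTransition{t}{\sd_{t}}$ with $\distclass{\sd_{s}} = \distclass{\sd_{t}}$, and by the identity above $\distclass{\sd_{t}} \in \PAparcombuncertainty{t}$. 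The reverse inclusion follows symmetrically from $t \PAbisim s$, giving $\PAparcombuncertainty{s} = \PAparcombuncertainty{t}$; label equality is immediate from the definition of $\PAbisim$.

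The main obstacle, and the only place where genuine care is needed, is the asymmetry between the two sides of the bisimulation game: $s$ offers ordinary transitions while $t$ answers with combined transitions. The affine-commutation identity is exactly what reconciles this, turning ``each ordinary transition of $s$ is matched by a combined transition of $t$'' into ``each extreme generator of $\PAparcombuncertainty{s}$ lies in the convex body $\PAparcombuncertainty{t}$'', from which set equality of the two polytopes drops out by convexity. The only other point I would be careful to state rather than assume is the preliminary fact that $\PAbisimord$ is itself a probabilistic bisimulation, since it guarantees that the partition $\partitionset{\stateSet}$ used to define $\PAparcombuncertainty{\functionDot}$ is the one with respect to which the lifting $\liftrelord$ in the matching condition is taken.
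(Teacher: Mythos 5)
Your proof establishes the statement as literally written, but by a genuinely different route than the paper, and the difference matters. Instantiating $\relord := \PAbisimord$ renders the right-to-left implication vacuous (there ``$s \rel t$'' \emph{is} ``$s \PAbisim t$''), so only the forward direction carries content; your forward argument --- isolating the affine identity $\distclass{\sum_{i} p_{i} \cdot \sd_{i}} = \sum_{i} p_{i} \cdot \distclass{\sd_{i}}$, deducing $\PAparcombuncertainty{s} = \setcond{\distclass{\sd}}{\strongCombinedTransition{s}{\sd}}$, and then placing each generator of $\PAparcombuncertainty{s}$ inside the convex body $\PAparcombuncertainty{t}$ --- is essentially the computation the paper carries out with explicit index sets, just organized more cleanly around convexity. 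What the paper proves in addition, and what your choice of $\relord$ discharges rather than proves, is the substantive converse: for an \emph{arbitrary} equivalence relation $\relord$ all of whose pairs satisfy $\APLabelling(s) = \APLabelling(t)$ and $\PAparcombuncertainty{s} = \PAparcombuncertainty{t}$, the relation $\relord$ is itself a probabilistic bisimulation (shown by decomposing $\distclass{\sd_{s}} \in \PAparcombuncertainty{t}$ into a convex combination of generators of $\PAparcombuncertainty{t}$ and assembling the corresponding combined transition from $t$). That direction is what reduces bisimilarity checking to polytope comparison in the partition-refinement procedure described just before the lemma, and it is the content of the cited theorem of~\cite{CS02}; your version says nothing about whether a relation found to satisfy the polytope condition is in fact a bisimulation. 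So there is no error, but if the lemma is to do the work the paper needs of it, you should prove the right-to-left direction for arbitrary $\relord$ rather than trivializing it by fiat.
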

To simplify the presentation of the proof, we first introduce some notation. 
Given an equivalence relation $\relord$ on $\stateSet$, for each distribution $\sd \in \Disc{\stateSet}$, let $\bar{\sd} \in \Disc{\partitionset{\stateSet}}$ denote the corresponding distribution $\bar{\sd} = \distclass{\sd}$, i.e., $\bar{\sd}$ is such that $\bar{\sd}(\eqclass) = \sum_{s' \in \eqclass} \sd(s')$ for each $\eqclass \in \partitionset{\stateSet}$.
\begin{myproof}
	We show the two implications separately.
	For the implication from left to right, suppose that $s \PAbisim t$;
	this implies that there exists a probabilistic bisimulation $\relord$ such that $s \rel t$ and $\APLabelling(s) = \APLabelling(t)$.
	We want to show that $\PAparcombuncertainty{s} = \PAparcombuncertainty{t}$ holds.
	To this aim, let $\eta \in \PAparcombuncertainty{s}$. 
	By definition of $\PAparcombuncertainty{s}$, it follows that there exist a finite set of indexes $I_{\eta}$, a multiset of real values $\setcond{p_{\eta, i}}{i \in I_{\eta}}$ and a multiset of distributions $\setcond{\eta_{i} \in \PAparcombuncertainty{s}}{i \in I_{\eta}, \exists (s,\sd_{s,i}) \in \transitionRelation: \eta_{i} = \distclass{\sd_{s,i}}}$ such that $\sum_{i \in I_{\eta}} p_{\eta,i} = 1$ and $\sum_{i \in I_{\eta}} p_{\eta,i} \cdot \eta_{i} = \eta$.
	Since $s \rel t$ and $\relord$ is a probabilistic bisimulation, it follows that for each $i \in I_{\eta}$ there exists a combined transition $\strongCombinedTransition{t}{\sd_{t}}$ such that $\sd_{s,i} \liftrel \sd_{t}$.
	By definition of combined transition, it follows that there exist a finite set of indexes $I_{t}$, a set of transitions $\setcond{(t, \sd_{t,i}) \in \transitionRelation}{i \in I_{t}}$ and a multiset of real values $\setcond{p_{t,i} \in \posreals}{i \in I_{t}}$ such that $\sum_{i \in I_{t}} p_{t,i} = 1$ and $\sd_{t} = \sum_{i \in I_{t}} p_{t,i} \cdot \sd_{t,i}$.
	This implies that for each $i \in I_{t}$, $\bar{\sd}_{t,i} \in \PAparcombuncertainty{t}$.
	Moreover, since by definition of lifting we have that for each $\eqclass \in \partitionset{\stateSet}$, $\sd_{s}(\eqclass) = \sd_{t}(\eqclass)$, it follows immediately that $\bar{\sd}_{t} = \bar{\sd}_{s}$, thus we have that $\eta = \bar{\sd}_{s} = \bar{\sd}_{t} \in \PAparcombuncertainty{t}$, hence $\PAparcombuncertainty{s} \subseteq \PAparcombuncertainty{t}$.
	By swapping the roles of $s$ and $t$, we can show in the same way that $\PAparcombuncertainty{t} \subseteq \PAparcombuncertainty{s}$, hence $\PAparcombuncertainty{s} = \PAparcombuncertainty{t}$ as required.
	
	For the implication from right to left, fix an equivalence relation $\relord$ on $\stateSet$ such that for each $(s,t) \in \relord$ it holds that $\APLabelling(s) = \APLabelling(t)$ and $\PAparcombuncertainty{s} = \PAparcombuncertainty{t}$;
	we want to show that $\relord$ is a probabilistic bisimulation, i.e., whenever $s \rel t$ and $\strongTransition{s}{\sd_{s}}$ then there exists $\strongCombinedTransition{t}{\sd_{t}}$ such that $\sd_{s} \liftrel \sd_{t}$.
	Let $(s,t) \in \relord$;
	if $\PAparcombuncertainty{s} = \emptyset$, then the step condition of the probabilistic bisimulation is trivially verified since there is no transition $\strongTransition{s}{\sd_{s}}$ from $s$ that needs to be matched by $t$.
	Suppose now that $\PAparcombuncertainty{s} \neq \emptyset$ and consider a transition $\strongTransition{s}{\sd_{s}}$ so that $\bar{\sd}_{s} \in \PAparcombuncertainty{s}$.
	By hypothesis, $\bar{\sd}_{s} \in \PAparcombuncertainty{s} = \PAparcombuncertainty{t}$, thus there exist a finite set of indexes $I$, a multiset of distributions $\setcond{\sd_{i} \in \PAparcombuncertainty{t}}{i \in I}$ and a multiset of real values $\setcond{p_{i} \in \posreals}{i \in I}$ such that $\sum_{i \in I} p_{i} = 1$ and $\sum_{i \in I} p_{i} \cdot \sd_{i} = \bar{\sd}_{s}$.
	This implies, for each $i \in I$, that there exist a finite set of indexes $J_{i}$, a multiset of real values $\setcond{p_{i,j} \in \posreals}{j \in I_{i}}$, and a multiset of distributions $\setcond{\sd_{i,j} \in \PAparcombuncertainty{t}}{j \in J_{i}}$ such that $\sum_{j \in J_{i}} p_{i,j} = 1$, $\sum_{j \in J_{i}} p_{i,j} \cdot \sd_{i,j} = \sd_{i}$, and for each $j \in J_{i}$, $\sd_{i,j} = \bar{\sd}_{t,i,j}$ where $(t, \sd_{t,i,j}) \in \transitionRelation$.
	Consider now the combined transition $\strongCombinedTransition{t}{\sd_{t}}$ obtained by taking as set of indexes $J = \setcond{(i,j)}{i \in I, j \in J_{i}}$, as multiset of real values $\setcond{q_{i,j} \in \posreals}{(i,j) \in J, q_{i,j} = p_{i} \cdot p_{i,j}}$, and as set of transitions $\setcond{(t, \sd_{t,i,j}) \in \transitionRelation}{(i,j) \in J}$:
	we have that 
	\begin{align*}
		\sum_{(i,j) \in J} q_{i,j}
			& = \sum_{(i,j) \in J} p_{i} \cdot p_{i,j} 
			  = \sum_{i \in I} \sum_{j \in J_{i}} p_{i} \cdot p_{i,j} \\
			& = \sum_{i \in I} p_{i} \cdot \sum_{j \in J_{i}} p_{i,j} 
			  = \sum_{i \in I} p_{i} \cdot 1 
			  = 1 
	\end{align*}
	and that 
	\begin{align*}
		\sd_{t} 
			& = \sum_{(i,j) \in I} q_{i,j} \cdot \sd_{t,i,j} 
			  = \sum_{(i,j) \in I} p_{i} \cdot p_{i,j} \cdot \sd_{t,i,j} \\
			& = \sum_{i \in I} \sum_{j \in J_{i}} p_{i} \cdot p_{i,j} \cdot \sd_{t,i,j} 
			  = \sum_{i \in I} p_{i} \cdot \sum_{j \in J_{i}} p_{i,j} \cdot \sd_{t,i,j}\text{.}
	\end{align*}
	To complete the proof, we have to show that $\sd_{s} \liftrel \sd_{t}$, that is, for each $\eqclass \in \partitionset{\stateSet}$, $\sd_{s}(\eqclass) = \sd_{t}(\eqclass)$.
	Let $\eqclass \in \partitionset{\stateSet}$:
	we have that 
		\begin{align*}
		\sd_{t}(\eqclass)
		 & = \sum_{c \in \eqclass} \sd_{t}(c) 
		   = \sum_{c \in \eqclass} \sum_{i \in I} p_{i} \cdot \sum_{j \in J_{i}} p_{i,j} \cdot \sd_{t,i,j}(c) \\
		 & = \sum_{i \in I} p_{i} \cdot \sum_{j \in J_{i}} p_{i,j} \cdot \sum_{c \in \eqclass} \sd_{t,i,j}(c) 
		   = \sum_{i \in I} p_{i} \cdot \sum_{j \in J_{i}} p_{i,j} \cdot \bar{\sd}_{t,i,j}(\eqclass) \\
		 & = \sum_{i \in I} p_{i} \cdot \sum_{j \in J_{i}} p_{i,j} \cdot \sd_{i,j}(\eqclass) 
		   = \sum_{i \in I} p_{i} \cdot \sd_{i}(\eqclass) 
		   = \bar{\sd}_{s}(\eqclass) \\
		 & = \sum_{c \in \eqclass} \sd_{s}(c) 
	       = \sd_{s}(\eqclass)\text{,}
		\end{align*}
	as required.
\end{myproof}

\begin{lemma}
\label{lem:polytopeProjectEquivalence}
	Given a \PA $\aut$ and an equivalence relation $\relord$ on $\stateSet$, for $n = \setcardinality{\partitionset{\stateSet}}$, it holds that for each $(s, t) \in \relord$, if $\PAparcombuncertainty{s} = \PAparcombuncertainty{t}$ then $(\prod_{\eqclass \in \partitionset{\stateSet}} \vctproj[\eqclass]{\PAparcombuncertainty{s}}) \cap \Delta_{n} = (\prod_{\eqclass \in \partitionset{\stateSet}} \vctproj[\eqclass]{\PAparcombuncertainty{t}}) \cap \Delta_{n}$.
\end{lemma}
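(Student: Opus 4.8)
The plan is to observe that the asserted equality is an immediate consequence of the hypothesis, once one notices that every operation occurring on the two sides of the equation --- coordinatewise projection, Cartesian product, and intersection with $\Delta_{n}$ --- is a deterministic function of the polytope it is applied to. Consequently no geometric reconstruction is required, and the whole argument is a matter of unwinding the definitions rather than of establishing a genuine new set identity.

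First I would recall that, by the definition of $\PAparcombuncertainty{s}$, both $\PAparcombuncertainty{s}$ and $\PAparcombuncertainty{t}$ are convex hulls of finitely many points of $\Disc{\partitionset{\stateSet}} \subseteq \reals^{n}$ (finitely many, because $\aut$ is finite-transition), hence compact polytopes. Therefore, for every class $\eqclass \in \partitionset{\stateSet}$ the extrema $\min_{\eqclass} \PAparcombuncertainty{s}$ and $\max_{\eqclass} \PAparcombuncertainty{s}$ are attained, so that the projection $\vctproj[\eqclass]{\PAparcombuncertainty{s}} = \interval{\min_{\eqclass} \PAparcombuncertainty{s}}{\max_{\eqclass} \PAparcombuncertainty{s}}$ is a well-defined closed interval (with the convention that the projection of the empty polytope is empty). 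The point to record is that this interval depends only on the set $\PAparcombuncertainty{s}$ and on the coordinate $\eqclass$.

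The main step is then a single deduction. Since by hypothesis $\PAparcombuncertainty{s} = \PAparcombuncertainty{t}$, applying the same coordinate projection to equal sets yields $\vctproj[\eqclass]{\PAparcombuncertainty{s}} = \vctproj[\eqclass]{\PAparcombuncertainty{t}}$ for every $\eqclass \in \partitionset{\stateSet}$. Taking the Cartesian product of these identical intervals over all $n$ classes produces the same box, $\prod_{\eqclass \in \partitionset{\stateSet}} \vctproj[\eqclass]{\PAparcombuncertainty{s}} = \prod_{\eqclass \in \partitionset{\stateSet}} \vctproj[\eqclass]{\PAparcombuncertainty{t}}$, and intersecting both with one and the same simplex $\Delta_{n}$ preserves the equality, which is exactly the claim; the degenerate case $\PAparcombuncertainty{s} = \PAparcombuncertainty{t} = \emptyset$ reduces both sides to $\emptyset$ and is handled identically.

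Accordingly, I expect no real obstacle: the only thing that could fail would be if projection, product, or intersection were not genuinely functions of their polytope argument, and each of them plainly is. It is worth stressing where the content does \emph{not} lie: the converse direction --- recovering $\PAparcombuncertainty{s}$ from its bounding box $\prod_{\eqclass \in \partitionset{\stateSet}} \vctproj[\eqclass]{\PAparcombuncertainty{s}} \cap \Delta_{n}$ --- is false in general, since the box typically strictly contains the polytope, which is precisely the mechanism by which $\fold$ introduces spurious distributions. The lemma asserts only the trivial forward implication, so this asymmetry never enters the argument.
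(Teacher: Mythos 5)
Your proof is correct and follows essentially the same route as the paper's: both arguments simply note that equal polytopes have equal coordinate projections, hence equal products of projections, hence equal intersections with $\Delta_{n}$. The additional remarks on well-definedness and the failure of the converse are accurate but not needed for the claim.
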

\begin{proof}
	The proof is trivial, since by $\PAparcombuncertainty{s} = \PAparcombuncertainty{t}$ it follows that for each $\eqclass \in \partitionset{\stateSet}$, $\vctproj[\eqclass]{\PAparcombuncertainty{s}} = \vctproj[\eqclass]{\PAparcombuncertainty{t}}$.
	This implies that $\prod_{\eqclass \in \partitionset{\stateSet}} \vctproj[\eqclass]{\PAparcombuncertainty{s}} = \prod_{\eqclass \in \partitionset{\stateSet}} \vctproj[\eqclass]{\PAparcombuncertainty{t}}$ thus $(\prod_{\eqclass \in \partitionset{\stateSet}} \vctproj[\eqclass]{\PAparcombuncertainty{s}}) \cap \Delta_{n} = (\prod_{\eqclass \in \partitionset{\stateSet}} \vctproj[\eqclass]{\PAparcombuncertainty{t}}) \cap \Delta_{n}$, as required.
\end{proof}

\begin{lemma}
\label{lem:foldBisimPreservation} 
	Given two \PAs $\aut_{1}$ and $\aut_{2}$, if $\aut_{1} \PAbisim \aut_{2}$ then $\fold(\aut_{1}) \bisim_{\un} \fold(\aut_{2})$.
\end{lemma}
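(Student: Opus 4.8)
The plan is to prove that the equivalence relation witnessing $\aut_1 \PAbisim \aut_2$ is, verbatim, a probabilistic bisimulation witnessing $\fold(\aut_1) \bisim_{\un} \fold(\aut_2)$. Concretely, let $\relord$ be the probabilistic bisimulation of Lemma~\ref{lem:PAbisimilarStatesSamePolytopes} on the disjoint union $\aut = \aut_1 \uplus \aut_2$; since $\aut_1 \PAbisim \aut_2$ yields $\startState_1 \PAbisim \startState_2$, we have $(\startState_1,\startState_2) \in \relord$. By Definition~\ref{def:FoldMap} the mapping $\fold$ leaves states, initial state, atomic propositions and labelling untouched, so $\relord$ is still an equivalence relation on the disjoint union of $\fold(\aut_1)$ and $\fold(\aut_2)$ that relates the initial states and, for each related pair, equates the labels. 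Hence only the step condition of Definition~\ref{def:bisim-cooperative} remains, and I would carry out the whole argument on $\aut$ so that Lemmas~\ref{lem:PAbisimilarStatesSamePolytopes} and~\ref{lem:polytopeProjectEquivalence} apply directly, keeping in mind that from a state of $\aut_1$ every interval into a state of $\aut_2$ is $[0,0]$ and vice versa.

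For the step condition fix $(s,t) \in \relord$, say with $s \in \stateSet_1$ and $t \in \stateSet_2$ (the other cases are symmetric). Because the folded \IMDP carries the single action $\foldAction$, a cooperative transition $\transition{s}{\sd_s}$ is exactly the choice of a feasible distribution $\sd_s \in \uncertainty{s}{\foldAction}$, and the matching obligation $\sd_s \liftrel \sd_t$ constrains only the induced class distribution $\distclass{\sd_s}$. It therefore suffices to prove that the sets of realizable class distributions agree, i.e.\ $\setcond{\distclass{\sd}}{\sd \in \uncertainty{s}{\foldAction}} = \setcond{\distclass{\sd}}{\sd \in \uncertainty{t}{\foldAction}}$: from this equality every $\transition{s}{\sd_s}$ is matched by picking any $\sd_t \in \uncertainty{t}{\foldAction}$ with $\distclass{\sd_t} = \distclass{\sd_s}$, and symmetrically.

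The core of the proof is to identify each of these two sets with a projection of the \PA polytope controlled by the bisimulation. Using Definition~\ref{def:FoldMap}, the bounds $\inf \intTransitionProbability(s,\foldAction,s')$ and $\sup \intTransitionProbability(s,\foldAction,s')$ are the coordinate-wise extrema of $\convexhull{\setcond{\sd}{(s,\sd) \in \transitionRelation_1}}$, the polytope underlying $\PAparcombuncertainty{s}$; aggregating these bounds class-by-class and intersecting with the simplex over $\partitionset{\stateSet}$, I would establish that $\setcond{\distclass{\sd}}{\sd \in \uncertainty{s}{\foldAction}} = (\prod_{\eqclass \in \partitionset{\stateSet}} \vctproj[\eqclass]{\PAparcombuncertainty{s}}) \cap \Delta_{n}$ with $n = \setcardinality{\partitionset{\stateSet}}$, and likewise for $t$. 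Granting this identification, Lemma~\ref{lem:PAbisimilarStatesSamePolytopes} turns $s \PAbisim t$ into $\PAparcombuncertainty{s} = \PAparcombuncertainty{t}$, and Lemma~\ref{lem:polytopeProjectEquivalence} upgrades this to $(\prod_{\eqclass} \vctproj[\eqclass]{\PAparcombuncertainty{s}}) \cap \Delta_{n} = (\prod_{\eqclass} \vctproj[\eqclass]{\PAparcombuncertainty{t}}) \cap \Delta_{n}$, which is exactly the required equality of realizable class distributions; the step condition, and with it $\fold(\aut_1) \bisim_{\un} \fold(\aut_2)$, follows.

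I expect the identification step above to be the main obstacle, since it is precisely the point where the \emph{spurious} distributions created by $\fold$ must be shown to be harmless. Folding discards the convex hull $\PAparcombuncertainty{s}$ in favour of the coordinate-wise box of its projections, so over states $\fold(\aut_1)$ genuinely admits distributions that are not convex combinations of the transitions of $\aut_1$; the delicate claim is that, after aggregation to the classes of $\relord$, these extra distributions contribute nothing beyond $(\prod_{\eqclass} \vctproj[\eqclass]{\PAparcombuncertainty{s}}) \cap \Delta_{n}$, so that the comparison can be routed through Lemma~\ref{lem:polytopeProjectEquivalence}. The subtlety is that summing per-state interval bounds over a class can a priori widen the per-class interval past the projection of the class polytope, so this is the step I would develop most carefully, verifying that the class aggregate of the feasible box coincides with the projected box of $\PAparcombuncertainty{s}$ rather than merely containing it.
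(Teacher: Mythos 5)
Your proposal retraces the paper's own proof almost verbatim: the same witnessing relation $\relord$, the same reduction of the step condition to equality of the realizable class distributions, and the same chain through Lemma~\ref{lem:PAbisimilarStatesSamePolytopes} and Lemma~\ref{lem:polytopeProjectEquivalence} via the identification $\setcond{\distclass{\sd}}{\sd \in \uncertainty{s}{\foldAction}} = (\prod_{\eqclass \in \partitionset{\stateSet}} \vctproj[\eqclass]{\PAparcombuncertainty{s}}) \cap \Delta_{n}$, which the paper dispatches with ``it is not difficult to see that''. The step you single out as the main obstacle is, however, exactly where the argument breaks, and your own stated worry is the right one: only the inclusion $\supseteq$ holds in general. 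The left-hand set coincides with $\paruncertainty{s}{\foldAction}$, whose class-$\eqclass$ interval has lower endpoint $\sum_{s' \in \eqclass} \inf \intTransitionProbability(s,\foldAction,s') = \sum_{s' \in \eqclass} \min \setcond{\sd(s')}{(s,\sd) \in \transitionRelation}$, whereas $\vctproj[\eqclass]{\PAparcombuncertainty{s}}$ has lower endpoint $\min \setcond{\sum_{s' \in \eqclass}\sd(s')}{(s,\sd) \in \transitionRelation}$; summing per-state minima over a class strictly undershoots the minimum of the class sums as soon as different transitions attain the per-state minima (symmetrically for the maxima). So the class aggregate of the feasible box can be strictly larger than the projected box, and the surplus depends on per-state data that $\PAbisim$ does not constrain.

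This is not a presentational gap that more care can close. Let $s_{1}$ enable the two transitions $\frac{1}{2}\dirac{u_{1}} + \frac{1}{2}\dirac{w_{1}}$ and $\frac{1}{2}\dirac{v_{1}} + \frac{1}{2}\dirac{x_{1}}$, and let $s_{2}$ enable $\frac{1}{2}\dirac{u_{2}} + \frac{1}{2}\dirac{w_{2}}$ and $\frac{1}{2}\dirac{u_{2}} + \frac{1}{2}\dirac{x_{2}}$, where $u_{1}, v_{1}, u_{2}$ share one label, $w_{1}, w_{2}$ a second, $x_{1}, x_{2}$ a third, and all are absorbing. Then $s_{1} \PAbisim s_{2}$. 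After folding, $\intTransitionProbability(s_{1},\foldAction,u_{1}) = \intTransitionProbability(s_{1},\foldAction,v_{1}) = [0,\frac{1}{2}]$, so the spurious distribution $\frac{1}{2}\dirac{u_{1}} + \frac{1}{2}\dirac{v_{1}}$ is feasible from $s_{1}$ and puts mass $1$ on the class of the first label; but $\intTransitionProbability(s_{2},\foldAction,u_{2}) = [\frac{1}{2},\frac{1}{2}]$, so every feasible distribution from $s_{2}$ puts exactly $\frac{1}{2}$ on that class, and no relabelling of the classes compatible with the atomic propositions repairs the mismatch. Hence the identification you plan to verify is false, the two sets of realizable class distributions genuinely differ for $\PAbisim$-related states, and the step condition itself fails. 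Any correct treatment has to confront $\paruncertainty{s}{\foldAction}$ directly (i.e., the summed per-state bounds) rather than route the comparison through $(\prod_{\eqclass} \vctproj[\eqclass]{\PAparcombuncertainty{s}}) \cap \Delta_{n}$; since the paper's proof rests on the same unproved identity, you have in fact located the soft spot of the published argument rather than merely left a step to fill in.
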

\begin{myproof}
	Let $\relord$ be the equivalence relation justifying $\aut_{1} \PAbisim \aut_{2}$;
	we claim that $\relord$ is also an \IMDP probabilistic bisimulation for $\fold(\aut_{1})$ and $\fold(\aut_{2})$, that is, it justifies $\fold(\aut_{1}) \bisim_{\un} \fold(\aut_{2})$.
	
	In the following we assume without loss of generality that $s_{1} \in \stateSet_{1}$ and $s_{2} \in \stateSet_{2}$;
	the other cases are similar.
	The fact that $\relord$ is an equivalence relation and that for each $(s_{1},s_{2}) \in \relord$, $\APLabelling_{1}(s_{1}) = \APLabelling_{2}(s_{2})$ follow directly by definition of $\PAbisim$.
	Since $\aut_{1} \PAbisim \aut_{2}$, it follows from Lemma~\ref{lem:PAbisimilarStatesSamePolytopes} that $\PAparcombuncertainty{s_{1}} = \PAparcombuncertainty{s_{2}}$. 
	Additionally, it is not difficult to see that for $s_{j} \in \setnocond{s_{1}, s_{2}}$, $\parcombuncertainty{s_j} = \paruncertainty{s_j}{f} = (\prod_{i=1}^{n}\vctproj[i]{\PAparcombuncertainty{s_{j}})\cap \Delta_{n}} $ where $\Delta_{n} = \setcond{(x_1,\dotsc,x_{n}) \in \posreals^{n}}{\sum_{i=1}^{n} x_{i} = 1}$ is the standard probability simplex in $\reals^{n}$. 
	By Lemma~\ref{lem:polytopeProjectEquivalence}, this implies that $\paruncertainty{s_1}{f} = \paruncertainty{s_2}{f}$.  
	Consider now $\transition{s_{1}}{\sd_{1}}$ with $\sd_{1} \in\uncertainty{s_{1}}{f}$:
	this implies that $\distclass{\sd_{1}} \in \paruncertainty{s_{1}}{f}$
	since $\paruncertainty{s_{1}}{f} = \paruncertainty{s_{2}}{f}$, it follows that $\distclass{\sd_{1}} \in \paruncertainty{s_{2}}{f}$ as well, thus there exists $\sd_{2} \in \uncertainty{s_{2}}{f}$ such that $\distclass{\sd_{2}} = \distclass{\sd_{1}}$. 
	By definition of $\distclass{\functionDot}$, we have that for each $\eqclass \in \partitionset{\stateSet}$, $\sum_{s \in \eqclass} \sd_{i}(s) = \distclass{\sd_{i}}$ for $i \in \setnocond{1,2}$, thus $\distclass{\sd_{2}} = \distclass{\sd_{1}}$ implies that for each $\eqclass \in \partitionset{\stateSet}$, $\sum_{s \in \eqclass} \sd_{1}(s) = \sum_{s \in \eqclass} \sd_{2}(s)$, i.e., $\sd_{1} \liftrel \sd_{2}$.
	This means that we have found $\transition{s_{2}}{\sd_{2}}$ with $\sd_{1} \liftrel \sd_{2}$, as required.
\end{myproof}

By using Lemmas~\ref{lem:unfoldBisimPreservation} and~\ref{lem:foldBisimPreservation} and Proposition~\ref{pro:PAbisimSyncProduct}, we can now show that $\bisim_{\un}$ is preserved by the synchronous product operator introduced in Definition~\ref{def:IMDPparallelComposition}. 
\begin{theorem}
\label{thm:IMDPBisimSyncProdCongruence}
	Given three \IMDPs $\imdp_{1}$, $\imdp_{2}$, and $\imdp_{3}$, if $\imdp_{1} \bisim_{\un} \imdp_{2}$, then $\syncProd{\imdp_{1}}{\imdp_{3}} \bisim_{\un} \syncProd{\imdp_{2}}{\imdp_{3}}$. 
\end{theorem}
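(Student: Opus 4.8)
The plan is to reduce the congruence claim for \IMDPs entirely to the already-established congruence for \PAs, travelling through the unfolding and folding mappings. Unfolding Definition~\ref{def:IMDPparallelComposition}, it suffices to prove that $\fold(\syncProd{\unfold(\imdp_{1})}{\unfold(\imdp_{3})}) \bisim_{\un} \fold(\syncProd{\unfold(\imdp_{2})}{\unfold(\imdp_{3})})$. I would obtain this by composing the three preservation results proved above: $\unfold$ turns $\bisim_{\un}$ into $\PAbisim$ (Lemma~\ref{lem:unfoldBisimPreservation}), $\PAbisim$ is a congruence for the \PA synchronous product (Proposition~\ref{pro:PAbisimSyncProduct}), and $\fold$ turns $\PAbisim$ back into $\bisim_{\un}$ (Lemma~\ref{lem:foldBisimPreservation}).

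Concretely, the steps in order are as follows. First, from the hypothesis $\imdp_{1} \bisim_{\un} \imdp_{2}$, Lemma~\ref{lem:unfoldBisimPreservation} yields $\unfold(\imdp_{1}) \PAbisim \unfold(\imdp_{2})$. Second, instantiating Proposition~\ref{pro:PAbisimSyncProduct} with $\aut_{1} = \unfold(\imdp_{1})$, $\aut_{2} = \unfold(\imdp_{2})$, and $\aut_{3} = \unfold(\imdp_{3})$, I obtain $\syncProd{\unfold(\imdp_{1})}{\unfold(\imdp_{3})} \PAbisim \syncProd{\unfold(\imdp_{2})}{\unfold(\imdp_{3})}$. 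Third, applying Lemma~\ref{lem:foldBisimPreservation} to this pair of bisimilar \PAs gives $\fold(\syncProd{\unfold(\imdp_{1})}{\unfold(\imdp_{3})}) \bisim_{\un} \fold(\syncProd{\unfold(\imdp_{2})}{\unfold(\imdp_{3})})$. Finally, rewriting both sides according to Definition~\ref{def:IMDPparallelComposition} turns this into the desired $\syncProd{\imdp_{1}}{\imdp_{3}} \bisim_{\un} \syncProd{\imdp_{2}}{\imdp_{3}}$, which closes the argument.

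The real content, and the only place where difficulty could plausibly hide, is not in this final chaining (which is purely formal) but in the fact that the closing $\fold$ step adds \emph{spurious} distributions to each state of the product \IMDP, as illustrated in Figures~\ref{fig:foldMap} and~\ref{fig:polytopes}. One might worry that the two products acquire \emph{different} spurious distributions and so fail to be bisimilar even though the underlying \PAs are. The point I would stress is that this worry is already dispelled by Lemma~\ref{lem:foldBisimPreservation}: because $\PAbisim$-equivalent states induce equal convex-hull polytopes over equivalence classes (Lemma~\ref{lem:PAbisimilarStatesSamePolytopes}), and folding only reads off the per-class projections of these polytopes intersected with the simplex (Lemma~\ref{lem:polytopeProjectEquivalence}), the feasible distribution sets $\paruncertainty{s_{1}}{\foldAction}$ and $\paruncertainty{s_{2}}{\foldAction}$ of corresponding product states coincide \emph{including} their spurious parts. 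Hence the spurious distributions are generated identically on both sides and are matched trivially, so no argument beyond invoking the three cited results is required.
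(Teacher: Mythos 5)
Your proposal is correct and follows exactly the same route as the paper's own proof: chaining Lemma~\ref{lem:unfoldBisimPreservation}, Proposition~\ref{pro:PAbisimSyncProduct}, and Lemma~\ref{lem:foldBisimPreservation}, then rewriting via Definition~\ref{def:IMDPparallelComposition}. Your closing remark about the spurious distributions being handled inside Lemma~\ref{lem:foldBisimPreservation} matches the paper's own discussion preceding the theorem.
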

\begin{myproof}
	Assume that $\imdp_{1} \bisim_{\un} \imdp_{2}$. 
	By Lemma~\ref{lem:unfoldBisimPreservation}, it follows that $\unfold(\imdp_{1}) \PAbisim \unfold(\imdp_{2})$, thus, by Proposition~\ref{pro:PAbisimSyncProduct}, we have that $\syncProd{\unfold(\imdp_{1})}{\unfold(\imdp_{3})} \PAbisim \syncProd{\unfold(\imdp_{2})}{\unfold(\imdp_{3})}$.
	Lemma~\ref{lem:foldBisimPreservation} now implies that $\fold(\syncProd{\unfold(\imdp_{1})}{\unfold(\imdp_{3})}) \bisim_{\un} \fold(\syncProd{\unfold(\imdp_{2})}{\unfold(\imdp_{3})})$, that is, $\syncProd{\imdp_{1}}{\imdp_{3}} \bisim_{\un} \syncProd{\imdp_{2}}{\imdp_{3}}$, as required.
\end{myproof}

\section{Interleaved approach}
In the previous sections, we have considered the parallel composition via synchronous production, which is working by the definition of folding collapsing all labels to a single transition.
Here we consider the other extreme of the parallel composition: interleaving only.
\begin{definition}
\label{def:IMDPinterleavedParallelComposition}
	Given two \IMDPs $\imdp_{l}$ and $\imdp_{r}$, we define the \emph{interleaved composition} of $\imdp_{l}$ and $\imdp_{r}$, denoted by $\interleavedProd{\imdp_{l}}{\imdp_{r}}$, as the \IMDP $\imdp = (\stateSet, \startState, \actionSet, \APSet, \APLabelling, \intTransitionProbability)$ where 
% 	\begin{inparaitem}[$\bullet$]
% 	\item
		$\stateSet = \stateSet_{l} \times \stateSet_{r}$;
% 	\item 
		$\startState = (\startState_{l},\startState_{r})$;
% 	\item
		$\actionSet = (\actionSet_{l} \times \setnocond{l}) \cup (\actionSet_{r} \times \setnocond{r})$; 
% 	\item 
		$\APSet = \APSet_{l} \cup \APSet_{r}$;
% 	\item 
		for each $(s_{l},s_{r}) \in \stateSet$, $\APLabelling(s_{l},s_{r}) = \APLabelling_{l}(s_{l}) \cup \APLabelling_{r}(s_{r})$;
		and
% 	\item
% 		$
		\[
			\intTransitionProbability((s_{l},s_{r}), (a,i), (t_{l},t_{r})) =
				\begin{cases}
					\intTransitionProbability_{l}(s_{l}, a, t_{l}) & \text{if $i = l$ and $t_{r} = s_{r}$,} \\
					\intTransitionProbability_{r}(s_{r}, a, t_{r}) & \text{if $i = r$ and $t_{l} = s_{l}$,} \\
					\interval{0}{0} & \text{otherwise.}
				\end{cases}
		\]
% 		$
% 	\end{inparaitem}
\end{definition}

\begin{theorem}
\label{thm:IMDPBisimInterleavedProdCongruence}
	Given three \IMDPs $\imdp_{1}$, $\imdp_{2}$, and $\imdp_{3}$, if $\imdp_{1} \bisim_{\un} \imdp_{2}$, then $\interleavedProd{\imdp_{1}}{\imdp_{3}} \bisim_{\un} \interleavedProd{\imdp_{2}}{\imdp_{3}}$.
\end{theorem}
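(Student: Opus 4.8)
The plan is to give a direct proof, mirroring the structure of Proposition~\ref{pro:PAbisimSyncProduct} and Theorem~\ref{thm:IMDPBisimSyncProdCongruence} but working entirely at the \IMDP level, since the interleaved composition of Definition~\ref{def:IMDPinterleavedParallelComposition} is defined directly on \IMDPs rather than through $\unfold$ and $\fold$. Let $\relord$ be the probabilistic bisimulation justifying $\imdp_{1} \bisim_{\un} \imdp_{2}$ on the disjoint union of $\imdp_{1}$ and $\imdp_{2}$, and set $\relord' = \relord \times \idrelord_{\stateSet_{3}}$. I claim $\relord'$ is a probabilistic bisimulation justifying $\interleavedProd{\imdp_{1}}{\imdp_{3}} \bisim_{\un} \interleavedProd{\imdp_{2}}{\imdp_{3}}$. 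That $\relord'$ is an equivalence relation follows from $\relord$ and $\idrelord_{\stateSet_{3}}$ being equivalences; the initial states are related since $\startState_{1} \rel \startState_{2}$ and $\startState_{3} = \startState_{3}$; and the labels agree because $\APLabelling(s_{1},s_{3}) = \APLabelling_{1}(s_{1}) \cup \APLabelling_{3}(s_{3}) = \APLabelling_{2}(s_{2}) \cup \APLabelling_{3}(s_{3})$ whenever $s_{1} \rel s_{2}$. As in the other proofs I treat only the representative case $s_{1} \in \stateSet_{1}$, $s_{2} \in \stateSet_{2}$, the remaining cases being symmetric.

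The heart of the argument is a decomposition of the cooperative transitions of the interleaved product. For a state $(s_{1},s_{3})$ the enabled actions are the $(a,l)$ with $a \in \stateActionSet{s_{1}}$ and the $(a,r)$ with $a \in \stateActionSet{s_{3}}$, and a short computation from Definition~\ref{def:IMDPinterleavedParallelComposition} shows that the feasible distributions for $(a,l)$ are exactly the $\nu \times \dirac{s_{3}}$ with $\nu \in \uncertainty{s_{1}}{a}$ (the right component is frozen at $s_{3}$ because its interval is $\interval{0}{0}$ off the diagonal), and dually for $(a,r)$. Since the maps $\nu \mapsto \nu \times \dirac{s_{3}}$ and $\nu \mapsto \dirac{s_{1}} \times \nu$ into $\Disc{\stateSet_{1} \times \stateSet_{3}}$ are affine, the convex hull commutes with them, so every cooperative target $\sd$ with $\transition{(s_{1},s_{3})}{\sd}$ can be written as
\[
	\sd = p \cdot (\nu_{l} \times \dirac{s_{3}}) + q \cdot (\dirac{s_{1}} \times \nu_{r})\text{,}
\]
with $p + q = 1$, $p,q \in \posreals$, $\transition{s_{1}}{\nu_{l}}$ in $\imdp_{1}$, and $\transition{s_{3}}{\nu_{r}}$ in $\imdp_{3}$ (terms with zero weight are simply dropped).

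I then match the two parts independently. Because $s_{1} \rel s_{2}$ and $\relord$ is a bisimulation, the $\imdp_{1}$-transition $\transition{s_{1}}{\nu_{l}}$ is matched by some $\transition{s_{2}}{\mu_{l}}$ in $\imdp_{2}$ with $\nu_{l} \liftrel \mu_{l}$; the $\imdp_{3}$-transition $\transition{s_{3}}{\nu_{r}}$ is matched by itself, since the third component is related only by $\idrelord_{\stateSet_{3}}$. Recombining, I set $\sd_{t} = p \cdot (\mu_{l} \times \dirac{s_{3}}) + q \cdot (\dirac{s_{2}} \times \nu_{r})$; reading the decomposition backwards shows $\transition{(s_{2},s_{3})}{\sd_{t}}$ holds in $\interleavedProd{\imdp_{2}}{\imdp_{3}}$. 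It remains to verify $\sd \liftrel[\relord'] \sd_{t}$. The equivalence classes of $\relord'$ have the form $\eqclass \times \setnocond{w}$ for $\eqclass$ a class of $\relord$ and $w \in \stateSet_{3}$; evaluating $\sd$ and $\sd_{t}$ on such a class splits into a left contribution $p \cdot \nu_{l}(\eqclass) \cdot \dirac{s_{3}}(w)$ versus $p \cdot \mu_{l}(\eqclass) \cdot \dirac{s_{3}}(w)$, which agree by $\nu_{l} \liftrel \mu_{l}$, and a right contribution, which agrees because $\nu_{r}$ is shared and, crucially, $s_{1}$ and $s_{2}$ lie in the same $\relord$-class, so $\dirac{s_{1}}(\eqclass) = \dirac{s_{2}}(\eqclass)$.

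I expect the main obstacle to be establishing the decomposition cleanly, in particular arguing that the convex hull of the union of the per-action feasible polytopes factors through the two affine embeddings into $\Disc{\stateSet_{1} \times \stateSet_{3}}$, together with the accompanying bookkeeping for products of distributions in the lifting step. The reliance on $s_{1}$ and $s_{2}$ sharing a common $\relord$-class to handle the frozen ``spectator'' component is the conceptual point that makes interleaving a congruence, and it must be stated explicitly rather than absorbed into routine calculation.
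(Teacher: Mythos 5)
Your proposal is correct and follows essentially the same route as the paper's own proof: the relation $\relord \times \idrelord_{\stateSet_{3}}$, the decomposition of a cooperative transition of the product into left-moves (of the form $\nu \times \dirac{s_{3}}$) and right-moves (of the form $\dirac{s_{1}} \times \nu$), matching the left part via the bisimulation and the right part by the identity, and recombining. The only cosmetic difference is that you aggregate all left-actions into a single cooperative transition $\nu_{l}$ before matching, whereas the paper matches each action $(a,i)$ separately and then recombines the per-action matches; both rest on the same convexity facts.
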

\begin{myproof}
	Let $\relord$ be the probabilistic bisimulation justifying $\imdp_{1} \bisim_{\un} \imdp_{2}$ and define $\relord' = \relord \times \idrelord_{\stateSet_{3}}$;
	we claim that $\relord'$ is a probabilistic bisimulation between $\interleavedProd{\imdp_{1}}{\imdp_{3}}$ and $\interleavedProd{\imdp_{2}}{\imdp_{3}}$.
	The fact that $\relord'$ is an equivalence relation follows trivially by its definition and the fact that $\relord$ is an equivalence relation.
	The fact that $((\startState_{1},\startState_{3}), (\startState_{2},\startState_{3}))$ follows immediately by the hypothesis that $(\startState_{1}, \startState_{2}) \in \relord$ and $(\startState_{3}, \startState_{3}) \in \idrelord_{\stateSet_{3}}$.

	Let $((s_{1},s_{3}),(s_{2},s_{3})) \in \relord'$.
	Assume, without loss of generality, that $s_{1} \in \stateSet_{1}$ and $s_{2} \in \stateSet_{2}$;
	the other cases are similar.
	The fact that $\APLabelling_{1,3}(s_{1},s_{3}) = \APLabelling_{2,3}(s_{2},s_{3})$ is straightforward, since by definition of interleaved composition and the hypothesis that $s_{1} \rel s_{2}$, we have that $\APLabelling_{1,3}(s_{1},s_{3}) = \APLabelling_{1}(s_{1}) \cup \APLabelling_{3}(s_{3}) = \APLabelling_{2}(s_{2}) \cup \APLabelling_{3}(s_{3}) = \APLabelling_{2,3}(s_{2},s_{3})$, as required.

	Consider now a transition $\transition{(s_{1},s_{3})}{\sd_{1,3}}$.
	By definition, we have that $\sd_{1,3} \in \convexComb{\bigcup_{(a,i) \in \stateActionSet{s_{1},s_{3}}} \uncertainty{(s_{1},s_{3})}{(a,i)}}$.
	This implies that there exist a multiset of distributions $\setcond{\sd_{a,i} \in \uncertainty{(s_{1},s_{3})}{(a,i)}}{(a,i) \in \stateActionSet{s_{1},s_{3}}}$ and a multiset of real values $\setcond{p_{a,i} \in \posreals}{(a,i) \in \stateActionSet{s_{1},s_{3}}}$ such that $\sum_{(a,i) \in \stateActionSet{s_{1},s_{3}}} p_{a,i} = 1$ and $\sum_{(a,i) \in \stateActionSet{s_{1},s_{3}}} p_{a,i} \cdot \sd_{a,i} = \sd_{1,3}$.
	Consider an action $(a,i) \in \stateActionSet{s_{1},s_{3}}$:
	by definition of interleaved composition, it is either of the form $(a,l) \in \actionSet_{l} \times \setnocond{l}$, or of the form $(a,r) \in \actionSet_{r} \times \setnocond{r}$.
	Consider the two cases separately:
	\begin{description}
	\item[Case $(a,i) \in \actionSet_{l} \times \setnocond{l}$:]
		this means that $\sd_{a,i}$ is actually the distribution $\sd_{a,i} = \sd_{a} \times \dirac{s_{3}}$ where $\sd_{a} \in \uncertainty{s_{1}}{a}$ is such that for each $s'_{1} \in \stateSet_{1}$, $\sd_{a}(s'_{1}) = \sd_{a,i}(s'_{1}, s_{3})$, thus $\transition{s_{1}}{\sd_{a}}$.
		Since by hypothesis $(s_{1}, s_{2}) \in \relord$ and $\relord$ is a probabilistic bisimulation, there exists $\sd_{a,2} \in \convexComb{\bigcup_{b \in \stateActionSet{s_{2}}} \uncertainty{s_{2}}{b}}$ such that $\sd_{a} \liftrel \sd_{a,2}$.
		This implies that there exist a multiset of distributions $\setcond{\sd_{a,2,b} \in \uncertainty{s_{2}}{b}}{b \in \stateActionSet{s_{2}}}$ and a multiset of real values $\setcond{p_{a,2,b}}{b \in \stateActionSet{s_{2}}}$ such that $\sum_{b \in \stateActionSet{s_{2}}} p_{a,2,b} = 1$, $\sum_{b \in \stateActionSet{s_{2}}} p_{a,2,b} \cdot \sd_{a,2,b} = \sd_{a,2}$, and $\sd_{a} \times \dirac{s_{3}} \liftrel[\relord'] \sd_{a,2} \times \dirac{s_{3}}$.
		This means that for each $b \in \stateActionSet{s_{2}}$, we have that $\sd_{a,2,b} \times \dirac{s_{3}} \in \uncertainty{(s_{2},s_{3})}{(b,l)}$, thus by taking $\sd_{a,2,3} = \sum_{b \in \stateActionSet{s_{2}}} p_{a,2,b} \cdot (\sd_{a,2,b} \times \dirac{s_{3}})$ we have that $\transition{(s_{2},s_{3})}{\sd_{a,2,3}}$ and $\sd_{a,i} \liftrel[\relord'] \sd_{a,2,3}$.

	\item[Case $(a,i) \in \actionSet_{r} \times \setnocond{r}$:]
		this means that $\sd_{a,i}$ is actually the distribution $\sd_{a,i} = \dirac{s_{1}} \times \sd_{a}$ where $\sd_{a} \in \uncertainty{s_{3}}{a}$ is such that for each $s'_{3} \in \stateSet_{3}$, $\sd_{a}(s'_{3}) = \sd_{a,i}(s_{1}, s'_{3})$, thus $\transition{s_{3}}{\sd_{a}}$.
		This implies trivially that $\transition{(s_{2},s_{3})}{\sd_{a,2,3}}$ where $\sd_{a,2,3} = \dirac{s_{2}} \times \sd_{a}$ and $\sd_{a,i} \liftrel[\relord'] \sd_{a,2,3}$.
	\end{description}
	From the analysis of the two cases, we have that for each $(a,i) \in \stateActionSet{s_{1},s_{3}}$, there exists a transition $\transition{(s_{2},s_{3})}{\sd_{a,2,3}}$ such that $\sd_{a,i} \liftrel[\relord'] \sd_{a,2,3}$.
	This implies that $\sd_{2,3} = \sum_{(a,i) \in \stateActionSet{s_{1},s_{3}}} p_{a,i} \cdot \sd_{a,2,3} \in \convexComb{\bigcup_{(b,j) \in \stateActionSet{s_{2},s_{3}}} \uncertainty{(s_{2},s_{3})}{(b,j)}}$ and $\sd_{1,3} \liftrel[\relord'] \sd_{2,3}$, as required.
\end{myproof}

\section{Concluding Remarks} 
\label{sec:conclusion}
In this paper, we have studied the probabilistic bisimulation problem for interval \MDPs in order to 
speed up the run time of model checking algorithms that often suffer from the state space explosion.
Interval MDPs include two sources of nondeterminism for which we have considered the cooperative resolution in a dynamic setting.
We have revised and extended the compositionality reasoning in~\cite{HashemiHSSTW16} by further exploration on the possibility of defining 
the parallel operator for \IMDP models which preserve our notion of probabilistic bisimulation.

\paragraph{Acknowledgments}   
This work is supported 
by the EU 7th Framework Programme under grant agreements 295261 (MEALS) and 318490 (SENSATION), 
by the DFG as part of SFB/TR 14 AVACS, 
by the ERC Advanced Investigators Grant 695614 (POWVER),
by the CAS/SAFEA International Partnership Program for Creative Research Teams, 
by the National Natural Science Foundation of China (Grants No.\ 61472473, 61532019, 61550110249, 61550110506),
by the Chinese Academy of Sciences Fellowship for International Young Scientists, 
and 
by the CDZ project CAP (GZ 1023).

\bibliography{biblio}

\begin{thebibliography}{10}

\bibitem{DBLP:conf/tacas/BenediktLW13}
M.~Benedikt, R.~Lenhardt, and J.~Worrell.
\newblock {LTL} model checking of interval {Markov} chains.
\newblock In {\em TACAS}, volume 7795 of {\em LNCS}, pages 32--46, 2013.

\bibitem{CS02}
S.~Cattani and R.~Segala.
\newblock Decision algorithms for probabilistic bisimulation.
\newblock In {\em {CONCUR}}, volume 2421 of {\em LNCS}, pages 371--385, 2002.

\bibitem{DBLP:conf/fossacs/ChatterjeeSH08}
K.~Chatterjee, K.~Sen, and T.~A. Henzinger.
\newblock Model-checking omega-regular properties of interval {Markov} chains.
\newblock In {\em FoSSaCS}, volume 4962 of {\em LNCS}, pages 302--317, 2008.

\bibitem{CGMTZ96}
G.~Chehaibar, H.~Garavel, L.~Mounier, N.~Tawbi, and F.~Zulian.
\newblock Specification and verification of the
  {PowerScale\textsuperscript{\textregistered}} bus arbitration protocol: An
  industrial experiment with {LOTOS}.
\newblock In {\em {FORTE}}, pages 435--450, 1996.

\bibitem{CHLS09}
N.~Coste, H.~Hermanns, E.~Lantreibecq, and W.~Serwe.
\newblock Towards performance prediction of compositional models in industrial
  {GALS} designs.
\newblock In {\em CAV}, volume 5643 of {\em LNCS}, pages 204--218, 2009.

\bibitem{DBLP:conf/vmcai/DelahayeKLLPSW11}
B.~Delahaye, J.-P. Katoen, K.~G. Larsen, A.~Legay, M.~L. Pedersen, F.~Sher, and
  A.~Wasowski.
\newblock Abstract probabilistic automata.
\newblock In {\em VMCAI}, volume 6538 of {\em LNCS}, pages 324--339, 2011.

\bibitem{DBLP:conf/lata/DelahayeLLPW11}
B.~Delahaye, K.~G. Larsen, A.~Legay, M.~L. Pedersen, and A.~Wasowski.
\newblock Decision problems for interval {Markov} chains.
\newblock In {\em LATA}, volume 6638 of {\em LNCS}, pages 274--285, 2011.

\bibitem{HHK14}
V.~Hashemi, H.~Hatefi, and J.~Kr\v{c}\'{a}l.
\newblock Probabilistic bisimulations for {PCTL} model checking of interval
  {MDP}s.
\newblock In {\em {SynCoP}}, volume 145 of {\em EPTCS}, pages 19--33, 2014.

\bibitem{HashemiHSSTW16}
V.~Hashemi, H.~Hermanns, L.~Song, K.~Subramani, A.~Turrini, and
  P.~Wojciechowski.
\newblock Compositional bisimulation minimization for interval {Markov}
  decision processes.
\newblock In {\em LATA}, volume 9618 of {\em LNCS}, pages 114--126, 2016.

\bibitem{hermanns02}
H.~Hermanns.
\newblock {\em Interactive Markov chains: and the quest for quantified
  quality}.
\newblock Springer-Verlag, 2002.

\bibitem{HK00}
H.~Hermanns and J.-P. Katoen.
\newblock Automated compositional {Markov} chain generation for a plain-old
  telephone system.
\newblock {\em Sci. Comp. Progr.}, 36(1):97--127, 2000.

\bibitem{Iyengar05}
G.~N. Iyengar.
\newblock Robust dynamic programming.
\newblock {\em Mathematics of Operations Research}, 30(2):257--280, 2005.

\bibitem{JL91}
B.~Jonsson and K.~G. Larsen.
\newblock Specification and refinement of probabilistic processes.
\newblock In {\em {LICS}}, pages 266--277, 1991.

\bibitem{KS90}
P.~C. Kanellakis and S.~A. Smolka.
\newblock {CCS} expressions, finite state processes, and three problems of
  equivalence.
\newblock {\em {Information and Computation}}, 86(1):43--68, 1990.

\bibitem{KKZJ07}
J.-P. Katoen, T.~Kemna, I.~S. Zapreev, and D.~N. Jansen.
\newblock Bisimulation minimisation mostly speeds up probabilistic model
  checking.
\newblock In {\em {TACAS}}, volume 4424 of {\em LNCS}, pages 76--92, 2007.

\bibitem{DBLP:conf/formats/KatoenKN09}
J.-P. Katoen, D.~Klink, and M.~R. Neuh{\"a}u{\ss}er.
\newblock Compositional abstraction for stochastic systems.
\newblock In {\em FORMATS}, volume 5813 of {\em LNCS}, pages 195--211, 2009.

\bibitem{DBLP:journals/rc/KozineU02}
I.~Kozine and L.~V. Utkin.
\newblock Interval-valued finite {Markov} chains.
\newblock {\em Reliable Computing}, 8(2):97--113, 2002.

\bibitem{KNP11}
M.~Kwiatkowska, G.~J. Norman, and D.~Parker.
\newblock {PRISM} 4.0: Verification of probabilistic real-time systems.
\newblock In {\em CAV}, volume 6806 of {\em LNCS}, pages 585--591, 2011.

\bibitem{PTarjan87}
R.~Paige and R.~E. Tarjan.
\newblock Three partition refinement algorithms.
\newblock {\em {SIAM} Journal on Computing}, 16(6):973--989, 1987.

\bibitem{Seg95}
R.~Segala.
\newblock {\em Modeling and Verification of Randomized Distributed Real-Time
  Systems}.
\newblock PhD thesis, MIT, 1995.

\bibitem{Seg06}
R.~Segala.
\newblock Probability and nondeterminism in operational models of concurrency.
\newblock In {\em {CONCUR}}, volume 4137 of {\em LNCS}, pages 64--78, 2006.

\bibitem{SVA06}
K.~Sen, M.~Viswanathan, and G.~Agha.
\newblock Model-checking {Markov} chains in the presence of uncertainties.
\newblock In {\em TACAS}, volume 3920 of {\em LNCS}, pages 394--410, 2006.

\bibitem{TurriniH14}
A.~Turrini and H.~Hermanns.
\newblock Cost preserving bisimulations for probabilistic automata.
\newblock {\em Logical Methods in Computer Science}, 4(11):1--58, 2014.

\bibitem{yi1994reasoning}
W.~Yi.
\newblock Algebraic reasoning for real-time probabilistic processes with
  uncertain information.
\newblock In {\em FTRTFT}, volume 863 of {\em LNCS}, pages 680--693, 1994.

\end{thebibliography}
\bibliographystyle{abbrv}
\end{document}